\newtheorem{theorem}{Theorem}
\newtheorem{proposition}{Proposition}
\newtheorem{corollary}{Corollary}
\theoremstyle{definition}
\newtheorem{definition}{Definition}
\newtheorem{remark}{Remark}
\definecolor{mblue}{rgb}{0,0.4470,0.7410}
\newcommand{\ob}[1]{{#1}}
\title[On the Reliability of Multiple Systems Estimation]{On the Reliability of Multiple Systems Estimation for the Quantification of Modern Slavery}
\author{Olivier Binette and Rebecca C. Steorts}
\address{Duke University, Durham, United States.}
\email{beka@stat.duke.edu}
\begin{document}

\begin{abstract}
The quantification of modern slavery has received increased attention recently as organizations have come together to produce global estimates, where multiple systems estimation (MSE) is often used to this end. Echoing a long-standing controversy, disagreements have re-surfaced regarding the underlying MSE assumptions, the robustness of MSE methodology, and the accuracy of MSE estimates in this application. Our goal is to help address and move past these controversies. To do so, we review MSE, its assumptions, and commonly used models for modern slavery applications. We introduce all of the publicly available modern slavery datasets in the literature, providing a reproducible analysis and highlighting current issues. Specifically, we utilize an internal consistency approach that constructs subsets of data for which ground truth is available, allowing us to evaluate the accuracy of MSE estimators. Next, we propose a characterization of the large sample bias of estimators as a function of misspecified assumptions. Then, we propose an alternative to traditional (e.g., bootstrap-based) assessments of reliability, which allows us to visualize trajectories of MSE estimates to illustrate the robustness of estimates. Finally, our complementary analyses are used to provide guidance regarding the application and reliability of MSE methodology.
\end{abstract}

\section{Introduction}
Modern slavery refers ``to situations of exploitation that a person cannot refuse or leave because of threats, violence, coercion, deception, and/or abuse of power'' \citep{ILO2017, ILO2017b}. This term encompasses issues of forced labor, forced sexual exploitation, and forced marriage \citep{ILO2017}. Individuals involved in the recruitment, harboring, and receipt/transportation of victims of such exploitation are referred to as human traffickers \citep{UnitedNations2000, UnitedNations2000b, Sigmon2008}.\footnote{The terms ``modern slavery,'' ``contemporary forms of slavery,'' and ``human trafficking'' are commonly considered synonymous. However, the terminology has been subject to variation and debate due to its history and due to various legal definitions \citep{Feingold2010, Chuang2014, Cockayne2015, Dottridge2017, Mende2019, Scarpa2020, Allain2017, Davidson2015, Piper2015, Bunke2016}. In this paper, we use the term \emph{modern slavery} as defined by the International Labour Organization and as commonly referred to throughout the literature (section~\ref{sec:data}).}

%\subsubsection{Quantifying modern slavery using multiple systems estimation}
The quantification of modern slavery has received increased attention as organizations have come together to produce global estimates \citep{WalkFree2013, Datta2013, ILO2017, Landman2020}. These efforts have {assisted} anti-slavery campaigns, {led to} increased public awareness, 
and {supported} newly implemented governmental policies. One major, recent goal is {determining the number of} victims of modern slavery. Specifically, in the United Kingdom (UK),  the Home Office estimated between $10,000$ and $13,000$ potential victims of modern slavery in 2013 using multiple systems estimation (MSE) \citep{Silverman2014, Bales2015}. Producing this estimate was part of the strategy leading to the UK Modern Slavery Act 2015 \citep{UKact2015}. Similar studies have been carried out in the Netherlands \citep{VanDijk2017}, in New Orleans \citep{Bales2019}, in the Western United States (U.S.) \citep{Farrel2019}, in Australia \citep{Lyneham2019}, as well as in Serbia, Ireland and Romania \citep{UNODC2018a, UNODC2018b, UNODC2018c}. These are reviewed in section \ref{sec:data}. 

MSE, reviewed in section \ref{sec:background}, is often the only available technique to estimate the prevalence of modern slavery. This is because victims of modern slavery can be out of the reach of traditional surveys. Instead of relying on representative samples, MSE uses case reports from multiple organizations, such as the police and non-governmental organizations, in order to estimate the population size. This approach promises to expose the scale of an issue which could otherwise be ignored.

The foundations of MSE dates back to at least the 17th century, when John Graunt used similar ideas to estimate London's population in 1661 \citep{hald2005history}. Laplace later formalized the technique to provide error bounds \citep{Laplace1820}, Quetelet advocated this approach for more efficient population census \citep{Quetelet1827, stigler1986history}, and these ideas were further developed in \cite{sekar1949method}. Today, MSE (also referred to as capture-recapture) is widespread in population ecology, in epidemiology, for official statistics and in the social sciences \citep{Bird2018, bohning2017capture}.

\subsection{Two Centuries of Controversy}
Despite its potential, MSE has faced harsh criticism for nearly two centuries. Quetelet abandoned MSE following concerns raised by de Keverberg on the soundness of underlying assumptions \citep{Quetelet1827, stigler1986history}. More recently, Cormack 
%--- a reference in the field of capture-recapture for population ecology  --- 
expressed deep concerns regarding the use of similar methods in epidemiological applications \citep{Cormack1999}. He stated that ``many of these studies give estimates which are not scientifically justified by the underlying data.'' This led to an energetic correspondence between Cormack and proponents of MSE \citep{Hook1999, Cormack1999b, Hook2000, Cormack2000}. In the context of MSE for the quantification of modern slavery, the issues raised by Cormack and the following disagreement was almost exactly repeated in \cite{Whitehead2019} and in the correspondence with ensued with the authors of key modern slavery studies \citep{Vincent2020, Whitehead2020, Vincent2020b}. {The disagreements involved the soundness of assumptions underlying MSE, the robustness of the procedures, and the accuracy of estimates in applications.}

%\textbf{These disagreements center around two key issues:
%\begin{enumerate}
%\item whether or not multiple systems estimation fundamentally differs from other statistical methodology in the assumptions it requires,
%\item  how the practical usefulness of estimates should be weighed against the need for scientifically accurate and robust estimates.
%\end{enumerate}
%}

\subsection{Our Contribution}
Our goal is to help address these controversies. \ob{This is challenging because, as stated by \cite{Silverman2020}, ‘‘no ‘ground truth’ is available to investigate the accuracy of any estimates.'' Instead, we address this issue using the following key observations:}
%
%However, while \cite{Silverman2020} stated that ‘‘no ‘ground truth’ is available to investigate the accuracy of any estimates,''  we address this issue using the following key observations:
\begin{enumerate}
    \item data with ground truth can be obtained from available datasets using the \textit{internal consistency approach} of \cite{hook2000accuracy, Hook2012};
    \item the convergence and bias of estimates (due to possibly misspecified assumptions) can be characterized in an asymptotic framework; and
    \item the reliability of estimates can be diagnosed using resampling techniques.
\end{enumerate}

Regarding (a), we use all publicly available data from past MSE studies on modern slavery \citep{Silverman2014, Bales2015, Bales2019, VanDijk2017, Farrel2019, Lyneham2019}. The internal consistency approach constructs subsets of this data for which ground truth is available, allowing us to evaluate the accuracy of MSE estimators. \ob{This approach was recommended in the discussion of \cite{Silverman2020} by \cite{Ridout2020}, but has not, to our knowledge, previously been carried out for modern slavery data. Other types of internal validation approaches have been suggested as a fruitful avenue for future research in discussions by \cite{Bohning2020} in \cite{Silverman2020}.}
Regarding (b), we introduce a novel characterization of the large sample bias of estimators as a function of misspecified assumptions. Using our results, we quantify the effect of individual heterogeneity on the bias of estimates. This shows the scale and direction of the bias that can be expected in reasonable practical situations.
Regarding (c), we propose an alternative to traditional (e.g., bootstrap-based) assessments of reliability. Our proposal is more easily interpretable and it involves fewer degrees of freedom in its specification. In practice, it allows us to put modern slavery MSE estimates into the perspective of a hypothetical \textit{trajectory} of estimates. These trajectories are visualized to showcase \ob{the robustness} of estimates to small changes in the data.
In addition to these three contributions, we provide a thorough review of MSE methodology and a comparison of real data estimates. Our analyses \ob{allow us to} illustrate obstacles within MSE, \ob{to} provide practical recommendations, and \ob{to} provide further directions for research, \ob{complementing previous work in this area \citep{Silverman2020, Far2021}.}

%Most importantly, our points and proposals are complementary in nature. {For point (1)}, we provide an evaluation of accuracy which applies to \textit{subsets} of the data and \textit{conditionally} on a set of assumptions. {For point (2)}, we showcase the consequences of underlying assumptions. {For point (3),} we give insights on the full data rather than only subsets.

%To illustrate the effectiveness of our proposals, we consider all past modern slavery studies for which data is publicly available \ref{sec:data_and_estimators}. Finally, we provide guidance regarding how our methods can be used moving forward to users in the field. 

%For this empirical evaluation, we consider all past modern slavery studies using multiple systems estimation which we are aware of and for which data is publicly available. 

%The datasets and estimators which we consider are summarized in Section \ref{sec:data}.
%Given our analyses and results, we conclude in section \ref{sec:discussion} with a discussion of the main statistical challenges which come in the way of the reliability of multiple systems estimation studies. 

\subsection{Organization of the Paper}

The rest of the paper is organized as follows. Section~\ref{sec:background} reviews the MSE literature.
%, where we emphasize underlying assumptions and the influence of such assumptions on the accuracy of estimates.
%multiple systems estimation and the estimators which are later compared. Particular emphasis is given to underlying assumptions and to the influence of these assumptions on the accuracy of estimates.
Section~\ref{sec:data_and_estimators} describes data from past modern slavery studies, providing comparisons, performing a sensitivity analysis, and discussing MCMC convergence issues with one of the Bayesian models. Section~\ref{sec:internal_consistency_analysis} evaluates the performance of estimators when ground truth is available using the internal consistency approach. Section \ref{sec:large-sample-bias} provides results regarding the bias of estimates, including the consequences of individual heterogeneity. Section~\ref{sec:visual_assessment} proposes a visual diagnostic of estimator robustness and showcases its use on modern slavery data. Finally, we discuss main takeaways in section~\ref{sec:discussion}.

\section{Background} \label{sec:background}

{This section reviews} the general framework of MSE. We describe the fundamental idea as reflected in the Lincoln-Peterson estimator in section~\ref{sec:MSE_intro} and we introduce the general model with multiple lists in section~\ref{sec:general_framework}. 
Section~\ref{sec:assumptions} discusses the assumptions of MSE and their relationship to the existence of consistent population size estimators. We then review {log-linear, decomposable graphical, and latent class models} in section~\ref{sec:modeling-approaches}.

%We conclude in section \ref{sec:challenges} with a discussion of challenges which are specific to multiple systems estimation.

\subsection{Introduction to Multiple Systems Estimation}\label{sec:MSE_intro}

MSE is a technique used to estimate the total size of a population. It relies on multiple samples from the population, referred to as \textit{lists}, which have been collected by different organizations.
%
%The total number of \textit{observed} victims is obtained by combining the records from these sources and removing duplicates. On the other hand, the number of \textit{unobserved} victims is inferred through the patterns of overlap between the sources. Together, the number of observed and unobserved victims add up to the total population size. 
%
The basic principles of the approach are most intuitively explained in the context of two lists. Here two organizations (e.g. the police and a non-governmental organization) record their contact with individuals from the population of interest (such as potential victims of modern slavery).
A large overlap between the two lists may indicate that a large portion of the total population was observed. A small overlap may indicate that a larger portion was unobserved. This is justified if the two lists are independent, meaning that the probability that individual appears on one of the lists does not depend on whether an individual appears on the other list.

To formalize this idea, let $n_1$ be the number of potential victims appearing on a first list, let $n_2$ be the number on the second list, and let $m$ be the number of potential victims appearing on both. The classical Lincoln-Petersen estimator \citep{Lincoln1930, Petersen1895} of the total population size is  defined as
\begin{equation}\label{eq:lincoln_petersen_estimator}
    \hat N = \frac{n_1 n_2}{m}.
\end{equation}
The estimator is nearly unbiased if the two lists are independent (see \cite{bishop2007discrete, Chao2008} for a full account of the properties of the Lincoln-Petersen estimator, extensions, and applications).
This was first used to estimate the number of fish in closed reservoirs and extensions led to MSE methodology that is widely used in population ecology \citep{cormack1968statistics, Seber1982, Seber1986, Seber1992, Amstrup2005}. These extensions have also been adopted in epidemiology \citep{Wittes1974, Yip1995, Yip1995b, Chao2001} and to inform public policy \citep{Bird2018}, as well as in human rights applications \citep{Lum2013, ManriqueVallier2013}.

%In practice, the assumption of independence between samples if often unrealistic. This can be partly addressed by considering multiple samples from the population and by estimating interactions between samples using the data at hand. Methods for this are described in section {[XX]}.
%This is described in full detail in section \ref{sec:MSE}, where we introduce the statistical model for MSE and its underlying assumptions.

\subsection{The General Framework of Multiple Systems Estimation}\label{sec:general_framework}

Generally, more than two lists are used in MSE studies. This can provide greater coverage of the population of interest and  allows the estimation of certain interactions between lists. In this section, we review the general MSE model used for these purposes.

%This is used in section {[XX]} to showcase the particular challenges of multiple systems estimation applied to modern slavery. 
%This section introduces the standard MSE model and its underlying assumptions.

As previously stated, MSE is used to estimate the unknown size $N$ of a population when a complete enumeration is not possible. Instead, $L \geq 2$ lists of observed cases are used to draw inference. Each list records a small fraction of the population. The total number of \textit{observed} individuals, $n_{\text{obs}}$, is obtained by combining lists and removing any duplicate individuals. The number of \textit{unobserved} individuals,  $n_{\bm{0}}$, is estimated using MSE from the patterns of overlap between the lists. Together, the number of observed individuals ($n_{\text{obs}}$) and the number of \textit{unobserved} individuals ($n_{\bm{0}}$) account for the entire population, $N = n_{\text{obs}} + n_{\bm{0}}$.

For each individual $i = 1,2,3,\dots, N$ in the entire population, we observe a \textit{list inclusion pattern} $W_i = (W_{i,1}, W_{i,2}, \dots, W_{i,L}) \in \{0,1\}^L$ where $W_{i,j} = 1$ if individual $i$ appears on list $j$ and $W_{i,j} = 0$ otherwise. If $W_i = (0,0, \dots, 0) = \bm{0}$, then the $i$th individual is unobserved. 
The \textit{observed data} are the counts
\begin{equation}\label{eq:def_counts}
    n_x = \sum_{i=1}^N \mathbb{I}(W_i = x),\quad x \in \{0,1\}^L \backslash \{\bm{0}\}.
\end{equation}
This represents the counts of individuals with given non-zero inclusion patterns. The set of all observed counts is denoted by $(n_x)_{x \not = \bm{0}}$ and the number of observed individuals is $n_{\text{obs}} = \sum_{x \not = \bm{0}} n_x$. 

%{[TODO: Talk about the use of categorical covariates and why we can ignore it almost WLOG in this paper. N.B. there are two standard ways to account for covariates: stratification and inclusion of the covariates in the log-linear model.]}

For example, in the context of the United Kingdom (UK) study \citep{Silverman2014, Bales2015}, the lists correspond to organizations coming into contact with potential victims of modern slavery. The organizations are local authorities (LA), the police force (PF), the national crime agency (NCA), governmental organizations (GO), non-governmental organizations (NG), and the general public (GP). Each organization records identifying information for each case it comes into contact with. The resulting lists are then \textit{matched} together using record linkage or de-duplication \citep{christen_data_2012, Christophides2019, Binette2020}, which allows one to identify duplicate records from multiple lists. The observed data, containing all observed overlap counts, is reproduced in table \ref{table:UKdata}, where the PF and NCA lists have been combined following \cite{Silverman2014, Bales2015}. In table \ref{table:UKdata}, columns under ``Cases observed once'' represent the number of victims only observed in the list marked by an ``$\times$'' underneath. The other columns represent the amount of overlap between the lists marked by ``$\times$'' underneath. For example, 54 potential victims have only been reported by the LA list, that 463 potential victims have only been reported by the NG list, and that 15 potential victims have been reported by both LA and NG but not by any of the other lists. One victim has been observed on LA, NG, PFNCA and GO, but not on GP (rightmost column). In total, 2744 distinct potential victims have been identified in this dataset.

%\textcolor{red}{List the other studies and organize these tables in an Appendix with their tables.}

\begin{table}
    \caption{\label{table:UKdata} Counts of potential victims of modern slavery in the UK, disaggregated by the lists in which potential victims appear. This data was reported in \cite{Silverman2014}.}
    \centering
\begin{tabular}{r|ccccc|ccccccccc|cccc}
     Total& \multicolumn{5}{c}{Cases observed once} & \multicolumn{9}{c}{Cases observed twice} & \multicolumn{4}{c}{3+ times} \\
     2744 & 54 & 463 & 995 & 695 & 316 & 15 & 19 & 3 & 62 & 19 & 1 & 76 & 11 & 8 & 1 & 1 & 4 & 1 \\\midrule
    LA & $\times$ &  &  &  &  & $\times$ & $\times$ & $\times$ &  &  &  &  &  &  & $\times$ & $\times$ &  & $\times$ \\ 
    NG &  & $\times$ &  &  &  & $\times$ &  &  & $\times$ & $\times$ & $\times$ &  &  &  & $\times$ & $\times$ & $\times$ & $\times$ \\ 
    PFNCA &  &  & $\times$ &  &  &  & $\times$ &  & $\times$ &  &  & $\times$ & $\times$ &  & $\times$ &  & $\times$ & $\times$ \\ 
    GO &  &  &  & $\times$ &  &  &  & $\times$ &  & $\times$ &  & $\times$ &  & $\times$ &  & $\times$ & $\times$ & $\times$ \\ 
    GP &  &  &  &  & $\times$ &  &  &  &  &  & $\times$ &  & $\times$ & $\times$ &  &  &  &  \\ 
    \bottomrule
\end{tabular}
\end{table}

\subsection{Assumptions and Consistency of Estimators}\label{sec:assumptions}

%% N.B. all assumptions deal with the inclusion patterns.
{This section reviews} the three assumptions of the standard MSE model {regarding list inclusion patterns}. The first two are standard, stating the data is independently and identically distributed. The third is an identifiability assumption which we show in Proposition \ref{prop:1} is necessary to the existence of consistent population size estimators.

%\textcolor{red}{OB: State your contributions here to the literature. It's important that you get practice doing this and showcasing this so they are not hidden. Forward link to any important results.} {[Do you mean to forward-reference Proposition 1, or to forward-reference other results?]} In addition, we comment on how realistic such assumptions are in practice. 
%\textcolor{red}{{RCS: This is awkward in construction: The first two assumptions, of independence and identical distribution of the data, are formally stated below:}}
{The first two assumptions are formally stated below:}

%The first two assumptions of the standard MSE model are assumptions of independence and identical distribution of the data:

\begin{description}
    \item[A1] The list inclusion patterns $W_i$, $i=1,2,3,\dots, N$ are \textit{independent} from one another. That is, the lists on which individuals appear or do not appear has no influence on the inclusion patterns of other individuals \ob{ --- this is independence across individuals.}
    \item[A2] The list inclusion patterns $W_i$, $i=1,2,3,\dots, N$ are \textit{identically distributed} with
    \begin{equation}\label{eq:def_px}
        p_x = \mathbb{P}(W_i = x) > 0,\quad x \in \{0,1\}^L. 
    \end{equation}
    That is, for any set of lists represented by an inclusion pattern $x \in \{0,1\}^L$, all individuals are equally likely to have $x$ as an inclusion pattern.
\end{description}

\textbf{A1} and \textbf{A2} are classical assumptions in the MSE literature \ob{and we point the reader to \citep{Lum2013} for a practical discussion of MSE assumptions.} While assumption \textbf{A1} may not always hold, we expect that it has a non-negligible effect for large populations.
Assumption \textbf{A2} is also less stringent than it appears. It only requires the data to be \textit{marginally} identically distributed. That is, suppose that the inclusion pattern probability of an individual $i$ is affected by an unobserved variable $\lambda_i,$ which represents the type of crime involved or socio-demographic characteristics of the individual victim. As long as $\mathbb{E}_{\lambda_i}\left[\mathbb{P}(W_i = x \mid \lambda_i)\right] = p_x$ is constant and does not depend on $i$, then assumption \textbf{A2} is satisfied.
{Such models, where inclusion probabilities depend on latent individual characteristics, are referred to as individual \textit{heterogeneity} models \citep{Otis1978}. These are relevant to modern slavery applications given the heterogeneity in the population which may impact list inclusion probabilities. Throughout the paper, we assume that \textbf{A1} and \textbf{A2} hold. Heterogeneity models are specifically considered in section~\ref{sec:bias_heterogeneity}.}

%While in some cases potential victims may be related to one another or may be part of a group (\textbf{A1}), we do not expect such strong relationships between potential victims to be present at a large scale. Assumption \textbf{A2} is also less stringent than it appears. It only requires the data to be \textit{marginally} identically distributed. That is, suppose that the inclusion pattern probability of an individual $i$ is affected by an unobserved variable $\lambda_i,$ which represents the type of crime involved or socio-demographic characteristics of the individual victim. As long as $\mathbb{E}_{\lambda_i}\left[\mathbb{P}(W_i = x \mid \lambda_i)\right] = p_x$ is constant and does not depend on $i$, then assumption \textbf{A2} is satisfied. Therefore, we will assume that these assumptions hold throughout the paper.

%\textcolor{red}{Given that assumptions \textbf{A1} and \textbf{A2} are standard in the literature, we will also assume these to hold throughout the paper.}

{Here, we introduce a decomposition of the data likelihood due to \cite{Fienberg1972}.}
Under assumptions \textbf{A1} and \textbf{A2}, the observed data $(n_x)_{x \not = \bm{0}}$ is distributed as
\begin{align}
    n_{\text{obs}} &\sim \text{binomial}(1-p_{\bm{0}}, N),\label{eq:gen_model_n}\\
    (n_x)_{x \not = \bm{0}}\mid n_{\text{obs}} &\sim \text{multinomial}\left((q_x)_{x \not = \bm{0}}; n_{\text{obs}}\right),\label{eq:gen_model_counts}
\end{align}
where $$q_x = p_x / (1-p_{\bm{0}}) = \mathbb{P}(W_i = x \mid W_i \not = \bm{0}), \quad x \not = \bm{0}$$ is the conditional probability of the inclusion pattern $x$ given the individual being observed. {In some cases, a Poisson likelihood is used as an approximation to the multinomial model \eqref{eq:gen_model_n} and \eqref{eq:gen_model_counts} \citep{Cormack1989}. This is the case with the Poisson log-linear modeling approach reviewed in section \ref{sec:estimators}. The Poisson likelihood is used for convenience and does not make any important difference on the model, its assumptions, or resulting estimates.}

While the population size $N$ is identifiable in the standard MSE model {given by \eqref{eq:gen_model_n} and \eqref{eq:gen_model_counts}} \citep{Farcomeni2012}, assumptions \textbf{A1} and \textbf{A2} are not sufficient by themselves to obtain meaningful population size estimates. Indeed, it follows from \eqref{eq:gen_model_n} that the observed data provides information about $N$ only through $n_{\text{obs}}$ and $p_{\bm{0}}$. In section \ref{sec:large-sample-bias}, we formally define the notion of \textit{consistent} population size estimators to formalize what can be learned from the data. Roughly, a population size estimator $\hat N$ is consistent for a statistical model $\Theta \subset \left\{(p_x)_{x \not = \bm{0}}\,:\, \sum_x p_x = 1\right\}$ if and only if it converges to the true population size $N$ in large samples whenever the probabilities $p_x$ in \eqref{eq:def_px} are part of $\Theta$. The minimal requirement for the existence of consistent population size estimators is given in Proposition \ref{prop:1} below. 
 
\begin{proposition}\label{prop:1}
    If a model $\Theta$ admits a consistent population size estimator, then there exists a function $f$ such that
    $
        p_{\bm{0}} = f((q_x)_{x \not = \bm{0}})
    $
    for all $(p_x)_{x \in \{0,1\}^L} \in \Theta,$ where $q_x = \frac{p_x}{1-p_{\bm{0}}}$.
\end{proposition}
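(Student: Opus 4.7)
My plan is to prove the contrapositive: assume there exist $(p_x), (p_x') \in \Theta$ sharing a common conditional $q_x = q_x'$ for $x \neq \bm{0}$ but with $p_{\bm{0}} \neq p_{\bm{0}}'$, and exhibit a statistical obstruction that prevents any estimator from being consistent on $\Theta$. The key leverage is already encoded in the Fienberg decomposition \eqref{eq:gen_model_n}--\eqref{eq:gen_model_counts}: the conditional law of $(n_x)_{x \neq \bm{0}}$ given $n_{\text{obs}}$ is multinomial with probabilities $(q_x)$, so the only handle the data gives on the parameters beyond $(q_x)$ is the marginal law of $n_{\text{obs}} \sim \text{binomial}(N, 1-p_{\bm{0}})$.

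I would then choose sequences $N_k \to \infty$ and $N_k' = \lceil N_k (1-p_{\bm{0}})/(1-p_{\bm{0}}') \rceil$, matching the binomial means $\mu_k := N_k(1-p_{\bm{0}})$ up to $O(1)$ while keeping the variances $\mu_k p_{\bm{0}}$ and $\mu_k p_{\bm{0}}'$ in a fixed, non-unit ratio $\rho = p_{\bm{0}}'/p_{\bm{0}}$. Writing $P_k^{(1)}, P_k^{(2)}$ for the resulting data distributions under $(N_k, (p_x))$ and $(N_k', (p_x'))$, the shared multinomial conditional reduces the total variation distance to the binomial marginals:
$$
\text{TV}\bigl(P_k^{(1)}, P_k^{(2)}\bigr) \;=\; \text{TV}\bigl(\text{binomial}(N_k, 1-p_{\bm{0}}),\, \text{binomial}(N_k', 1-p_{\bm{0}}')\bigr).
$$
The main technical obstacle is bounding this quantity away from $1$. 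I would do so by invoking a local central limit theorem for the binomial, together with the classical, scale-invariant formula for the total variation between two Gaussians with equal means and a fixed variance ratio $\rho \neq 1$, obtaining $\limsup_{k} \text{TV}(P_k^{(1)}, P_k^{(2)}) \leq C(\rho) < 1$. This is the delicate step, because the two distributions really are distinguishable to some extent (variances differ), so one needs the correct Gaussian comparison to quantify that the distinguishing power is strictly sub-maximal.

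The final step is to contradict this bound using consistency. Setting $c = (1-p_{\bm{0}})/(1-p_{\bm{0}}') \neq 1$ and taking $c > 1$ without loss of generality, consistency yields $\hat N/N_k \to 1$ in $P_k^{(1)}$-probability and $\hat N/N_k \to c$ in $P_k^{(2)}$-probability. Hence the separating event $A_k = \{\hat N/N_k \leq (1+c)/2\}$ satisfies $P_k^{(1)}(A_k) \to 1$ and $P_k^{(2)}(A_k) \to 0$, so by the variational definition of total variation,
$$
\text{TV}\bigl(P_k^{(1)}, P_k^{(2)}\bigr) \;\geq\; \bigl|P_k^{(1)}(A_k) - P_k^{(2)}(A_k)\bigr| \;\longrightarrow\; 1,
$$
in direct conflict with the $C(\rho) < 1$ bound of the previous paragraph. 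This contradiction completes the contrapositive and hence the proposition.
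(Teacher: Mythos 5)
Your proposal is correct, but it takes a genuinely different route from the paper's. Both arguments start from the same structural observation --- that when $q_x=q_x'$ for $x\neq\bm{0}$, the two models can only be told apart through the law of $n_{\text{obs}}$ --- but they exploit it differently. You run a Le Cam-style two-point testing argument: match the binomial means by scaling the population sizes, reduce the total variation between the joint data laws to that between the binomial marginals (valid, since the conditional multinomial kernel is shared), bound that TV away from $1$ via a local CLT and the scale-invariant Gaussian TV formula for a fixed variance ratio, and then show that consistency would force the TV to $1$. The paper instead makes the observed data \emph{exactly} identically distributed under the two models by conditioning on the number of observed individuals: it introduces the stopping times $s_k$ and $\tilde s_k$ at which $k$ individuals have been observed, notes that the observed counts at these times have the same law because $q_x=\tilde q_x$, so the consistent estimator forces $s_k/\tilde s_k\to 1$, while the strong law of large numbers gives $s_k/\tilde s_k\to(1-\tilde p_{\bm{0}})/(1-p_{\bm{0}})\neq 1$. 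The paper's coupling avoids any quantitative distributional approximation and needs only the SLLN, which keeps the proof elementary and self-contained; your approach is technically heavier (the local CLT in total variation and the Gaussian comparison are the delicate steps, and you should also note that the $O(1)$ mean mismatch from the ceiling is negligible relative to the $\sqrt{\mu_k}$ standard deviation), but it is more modular, quantifies the residual indistinguishability, and would extend directly to ruling out consistency in weaker (in-probability) senses.
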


{The proof is in Appendix \ref{sec:proof_proposition}.} \\

{Proposition \ref{prop:1}} motivates {assumption \textbf{A3} which,} given some function $f$, restricts the set of probabilities $(p_x)$ under consideration.

\begin{description}
    \item[\textbf{A3}] There exists a function $f$ such that $$p_{\bm{0}} = f((q_x)_{x \not = \bm{0}}),\quad  \text{where} \quad q_x = \frac{p_x}{1-p_{\bm{0}}}.$$ That is, the unobserved probability $p_{\bm{0}}$ is the deterministic function $f$ of the observed data distribution through the cell probabilities $(q_x)_{x \not = \bm{0}}$.
\end{description}

We refer to this as the \textit{identifying assumption}, which formalizes a condition of \cite{Link2003}. That is, \textbf{A3} is equivalent to stating that no two different distributions $(p_x)_{x \in \{0,1\}^L}$ lead to the same zero-truncated distribution $(q_x)_{x \not = \bm{0}}$. Our assumption \textbf{A3} is equivalent to Definition 1 in \cite{AleshinGuendel2020} and Definition 2 in \cite{aleshin2021revisiting}. It relates part \eqref{eq:gen_model_counts} of the observed data distribution to the probability of an individual being unobserved. As such, it can be interpreted as specifying the \textit{missing data mechanism} at play in MSE. It can also be understood as an \textit{extrapolation formula} \citep{Manriquevallier2019}, which allows one to go from the observed data distribution $(q_x)_{x\not = \bm{0}}$ to the unobserved probability $p_{\bm{0}}$. Then through \eqref{eq:gen_model_n}, one can infer the total population size.  

An example of an assumption of the form \textbf{A3} is given in section \ref{sec:log_linear_model} in the context of log-linear modeling.

%The fundamental role of this assumption is discussed in detail in \cite{aleshin2021revisiting}.

\begin{remark}
    The observed data provides no information regarding assumption \textbf{A3}. In fact, as pointed out by \cite{Manriquevallier2019}, ``the way in which the probability $p_{\bm{0}}$ relates to the rest of $p_{x}$, $x \not = \bm{0}$, can neither be learned from data nor tested.''
Turning the choice of $f$ in \textbf{A3}, in practice, this is taken as the consequence of simpler and more easily interpretable assumptions. Typically, $f$ is chosen as the consequence of one of the particular modeling approaches described next section \ref{sec:modeling-approaches}.
\end{remark}

%The observed data provides no information regarding assumption \textbf{A3}. More specifically,  \cite{Manriquevallier2019} commented that ``the way in which the probability $p_{\bm{0}}$ relates to the rest of $p_{x}$, $x \not = \bm{0}$, can neither be learned from data nor tested.''
%
%In practice, the choice of $f$ in \textbf{A3} is taken as the consequence of simpler and more easily interpretable assumptions, or as part of the particular modeling approaches described next in section \ref{sec:modeling-approaches}. 

%The most common is the assumption of no full-way interaction term in the log-linear representation of the model, or the assumption of no three-way interaction terms in hierarchical log-linear models. This is further discussed in section {[XX]}.

%\begin{remark}
% In a Bayesian framework, assumption \textbf{A3} may be modified to specify a prior on $f$. This is discussed in section {[XX]}.
%\end{remark}

\subsection{Multiple Systems Estimation Methods}\label{sec:modeling-approaches}

In this section, we review MSE models widely used in the literature. 
These allow one to estimate the probabilities $p_x$ defined in \eqref{eq:def_px} and to specify
%for the specification of 
a function $f$ in \textbf{A3}, that provides via \eqref{eq:gen_model_n} a population size estimate. Crucially, we only consider models that do not inherently require covariate-level data as motivated by the applications in section \ref{sec:data}.  First, we review log-linear models \citep{Fienberg1972, Cormack1989}. Second, we review graphical models \citep{Madigan1995, Madigan1997}, which are a special case of log-linear models. Third, we review a family of latent class models \citep{ManriqueVallier2008, ManriqueVallier2016, AleshinGuendel2020}.

\subsubsection{Log-Linear Models}\label{sec:log_linear_model}

\cite{Fienberg1972} introduced the use of log-linear models for MSE, which provide the basis for many applications \citep{Yip1995, baillargeon2007rcapture}. Furthermore, the use of these models have been previously considered in the use of modern slavery studies (section \ref{sec:data_and_estimators}) \citep{Silverman2014, Bales2015, Bales2019, Silverman2020, Chan2020, VanDijk2017, Farrel2019, Lyneham2019}. 
In this section, we briefly review this literature, referring the reader to \citep{bishop2007discrete, Yip1995} for further details.

Log-linear models provide an interpretable re-parameterization of the model parameters $p_x$ and $N$ defined in \eqref{eq:def_px}. The log-linear model consists of an intercept term $\mu$, a main list effects term $\alpha_i$, two-way interaction terms $\beta_{i,j}$ (for lists $i\not = j$), and higher order interaction terms including a full-way interaction term $\gamma$. Thus, the log-linear parameterization of $N$ and $(p_x)_{x \in \{0,1\}^L}$ is given by
\begin{equation}\label{eq:log-linear-parameterization}
    \log (N p_x) = \mu + \sum_{i} x_i \alpha_i + \sum_{i\not = j} x_i x_j \beta_{i,j} + \cdots + x_1x_2\cdots x_L \gamma, \quad x \in \{0,1\}^L.
\end{equation}
The full-way interaction term $\gamma$ can be expressed as 
\begin{equation}\label{eq:def_gamma_sum}
    \gamma = \sum_{x\in \{0,1\}^L} (-1)^{\lvert x \rvert + 1} \log(p_x)
\end{equation}
where $\lvert x \rvert = \sum_i x_i$.

We now discuss the parameter interpretation in (\ref{eq:log-linear-parameterization}). If all parameters other than $\mu$ and the $\alpha_i$ are zero, this corresponds to independent lists. The probability that an individual appears in list $i$ is given by ${e^{\alpha_i}}/({1+e^{\alpha_i}})$. If two-way interaction terms are added, then $\beta_{i,j}$ represents how the odds of inclusion to non-inclusion on the $i$th list, conditionally on all other variables, changes depending on whether or not the individual appears on the $j$th list. For non-overlapping lists, setting $\beta_{i,j} =-\infty$ states that an individual cannot appear on both lists $i$ and $j$. Higher order interaction terms can be similarly interpreted in terms of log-odds changes.

One of the main advantages of the log-linear parameterization is the resulting model hierarchy. The independence model, with intercept $\mu$ and main effects $\alpha_i$, is obtained by setting all interaction terms to zero. More complex models can be obtained by adding interaction terms, allowing for dependencies between lists. Typically, simpler models are favored, with interaction terms only added to the extent that data provides evidence for them.  This has been called ``betting on sparsity'' \citep{friedman2001elements}.  For instance, all log-linear models can be fitted to the data using maximum likelihood estimation, and a single model can be selected based on Akaike's Information Criteria or through other criteria \citep{Chao2001}. \cite{regal1991effects, Yip1995} review other considerations involved in the selection of log-linear models and the reporting of corresponding estimates.

Crucially, the assumption of no full-way interaction in log-linear models, which corresponds to setting $\gamma = 0$ in \eqref{eq:log-linear-parameterization}, induces an identifying assumption of the form \textbf{A3}. The following assumption is typically made within the context of log-linear models \citep{Fienberg1972}:
\begin{description}
    \item[\textbf{A3.1}] The full-way interaction term $\gamma$ in the log-linear parameterization \eqref{eq:log-linear-parameterization} is zero. Equivalently, the probabilities $(p_x)$ defined in \eqref{eq:def_px}, with $q_x = p_x/(1-p_{\bm{0}})$, satisfy the relationship
    \begin{equation}\label{eq:assumptionA3.1}
        \log \frac{p_{\bm{0}}}{1-p_{\bm{0}}} = \sum_{x \not = \bm{0}} (-1)^{\lvert x \rvert} \log q_x.
    \end{equation}
\end{description}
This states that the $(L-1)$-way interaction term for individuals appearing on list $L$ is the same as the $(L-1)$-way interaction term for individuals not appearing on list~$L$. 

\begin{remark}
{As discussed in section \ref{sec:assumptions}, assumption \textbf{A3.1} is untestable and is made in order to obtain consistent population size estimators. \cite{Yip1995} states that this assumption ``is more likely to be approximately correct than assumptions about the absence of lower-order interactions.'' 
In the case of only two lists, \textbf{A3.1} is satisfied if the two list inclusion indicators $W_1$ and $W_2$ are uncorrelated or independent. With more than two lists, if one list is independent of another given the rest, then \textbf{A3.1} is also satisfied. }
\end{remark}
%

%In section \ref{sec:large-sample-bias}, we characterize the asymptotic bias of estimates when assumption \textbf{A3.1} is misspecified, such as in the presence of individual heterogeneity. \textcolor{red}{perhaps make it more clear that this is a new contribution so it's not missed.}

%Furthermore, the complete interpretation of \textbf{A3.1} can be given in a missing data framework. That is, \textbf{A3.1} represents a characteristics of the data which must be shared by both the individuals appearing on a given list $j$ and the individuals which do not appear on list $j$. This characteristic which must be shared is the full-way interaction term of the submodels induced by conditioning on inclusion or exclusion from the $j$th list. Assumption \textbf{A3.1} states that the full-way interaction terms of the induced submodels must be equal. In practice therefore, we can first restrict our attention to the cases appearing on the $j$th list (for which there is no missing count) in order to estimate the full-way interaction term of the corresponding submodel. Then, for the cases not appearing on the $j$th list (for which we're missing the number of unobserved individuals) we can use, through assumption \textbf{A3.1}, this same full-way interaction term as to extrapolate to individuals which do not appear on the $j$th list. All multiple systems estimation models based on assumptions \textbf{A1}-\textbf{A3.1} implicitely proceed in this way.

\subsubsection{Decomposable Graphical Models} \label{sec:decomposable-graphical-models}
{We now review} decomposable graphical models \citep{Darroch1980} with {hyper-Dirichlet} priors \citep{Dawid1993}, which were proposed by \cite{York1992, Madigan1995, Madigan1997} for MSE. First, we describe (undirected) graphical models, which are special cases of log-linear models with an intuitive interpretation of conditional dependencies. Second, we review decomposable graphical models with hyper-Dirichlet priors, which are mainly used for computational convenience as they provide a conjugate family for which population size estimates can be derived in closed form. 

\paragraph{Graphical Models}
Graphical models are statistical models where the dependency between variables is characterized by an interpretable graph. Nodes in the graph represent variables, and two variables are linked together if they exhibit certain conditional dependencies. In the context of MSE, the graph describes the dependencies and conditional independencies between lists. There is one node for each list; it represents the variable indicating whether or not a given individual appears on this list. The graph has \textbf{two equivalent properties}, known as the local Markov property or the conditional independence graph, which we define below.

\begin{description}
    \item[Local Markov property:] The conditional distribution of any variable only depends on other variables through its immediate neighbors in the graph.
    \item[Conditional independence property:] If $A$ and $B$ are two sets of vertices separated by another set $S$ in the graph, then the variables corresponding to $A$ and $B$ are conditionally independent given $S$.
\end{description}

Figure \ref{fig:graphs_examples} presents examples of conditional independence graphs for the list considered in the UK modern slavery study of \cite{Silverman2014, Bales2015}.

\begin{figure}[h]
    \centering
    \includegraphics[width=0.25\linewidth]{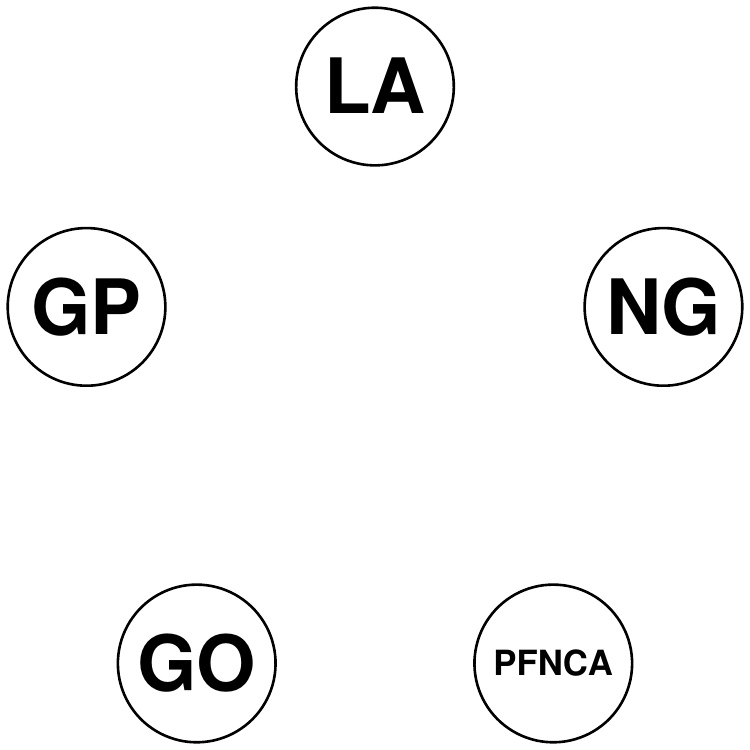}\hfill
    \includegraphics[width=0.25\linewidth]{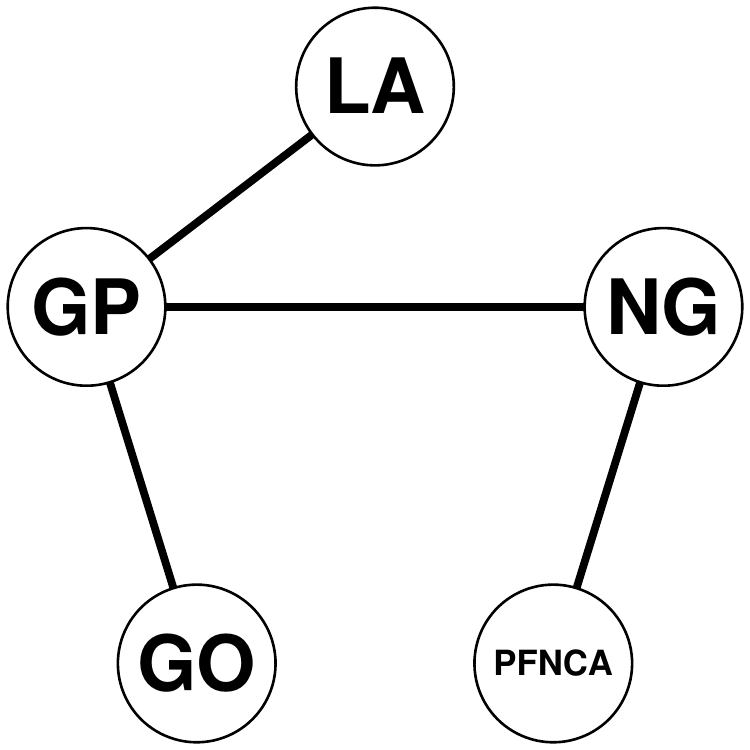}\hfill
    \includegraphics[width=0.25\linewidth]{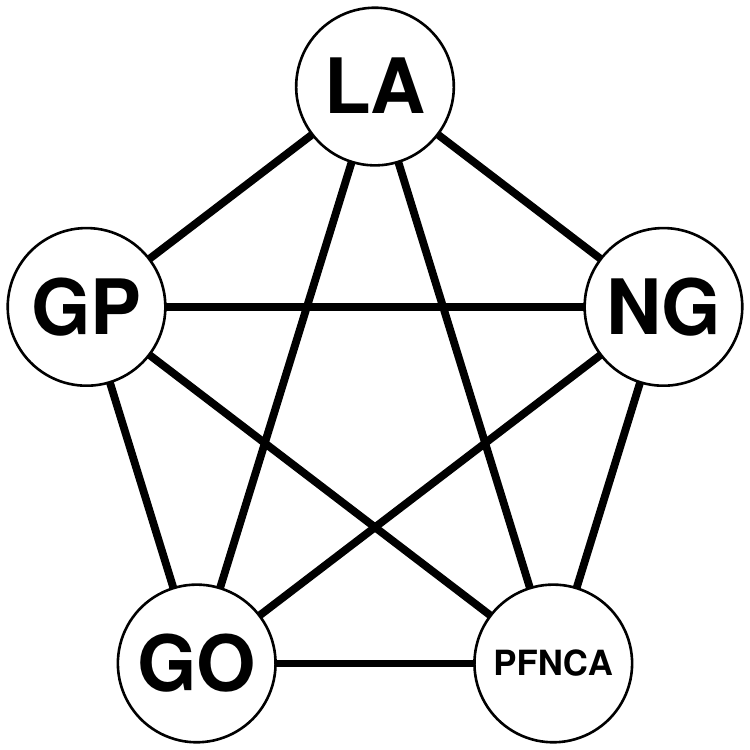}
    \caption{Examples of independence graphs on the 5 lists considered in the UK study of \cite{Silverman2014, Bales2015}. \textbf{Left}: Full independence between the lists. \textbf{Middle}: Various dependencies and conditional independencies. For example, NG is conditionally independent of GO and LA given GP. \textbf{Right}: Full dependency model with no non-trivial conditional independences.}
    \label{fig:graphs_examples}
\end{figure}

\paragraph{Decomposable Graphical Models and the Hyper-Dirichlet Prior}
%\textcolor{red}{TODO: This section is a bit difficult to get through. Go through it again and make sure everything is clearly defined and easy for someone who doesn't know this literature like you did with log-linear models.}\\
%\textcolor{red}{cycle is not defined nor clique. This is important to define and make more clear. Perhaps a picture here could be helpful.}
In this section, we review other important terminology used in graphical models. \textit{Decomposable} graphical models refer to graphical models for which the graph $G$ is  \textit{chordal} --- every cycle in the graph is part of a clique.\footnote{{A \textit{cycle} is a set of vertices which are connected in a closed chain. A \textit{clique} is a set of vertices which are all interconnected.}}
From a statistical point of view, decomposable graphical models have an important advantage in that every distribution on a decomposable graph is uniquely characterized by the set of marginal distributions over its cliques. {That is, no two different distributions on a decomposable graph can have the same marginal distributions over the cliques of the graph.} Furthermore, any set of \textit{pairwise consistent} marginal distributions over the cliques of a decomposable graph uniquely determines a joint distribution. Here, distributions on two sets of vertices $A$ and $B$ are said to be \emph{consistent} if they have the same marginal on $A \cap B$. {In other words, probability distributions over decomposable graphs can be specified through clique marginals which are pairwise consistent.}

\cite{Dawid1993} exploited these properties to define the \textit{hyper-Dirichlet prior}, a prior distribution for decomposable graphical models. This prior distribution is 
easily specified through Dirichlet clique marginals, allowing for tractable posterior inference.

\paragraph{MSE through Bayesian Model Averaging of Decomposable Graphical Models}

In the context of MSE and of \cite{Madigan1997}, the hyper-Dirichlet prior on decomposable graphical models is parameterized by a set of ``prior counts'' $\alpha_x$ for $x \in \{0,1\}^L$ \citep{Sadinle2018}. For a given decomposable graph, the marginalization of these counts over the cliques of the graph yields the corresponding parameters of the marginal Dirichlet priors. Given a prior on the population size, we obtain a tractable formula for the posterior distribution (see \cite{Madigan1997, Sadinle2018}). Uncertainty regarding the structure of the decomposable graph is incorporated using a prior on the set of all possible graphs, known as \emph{Bayesian model averaging} \citep{gelman2013bayesian}.

Excluding the complete graph, all graphical models are special cases of log-linear models with no full-way interactions \citep{Darroch1980}.
%This is a consequence of the Hammersley-Clifford Theorem [CITE]. 
As such, graphical modeling relies on assumption \textbf{A3.1} of no full-way interaction.

\subsubsection{Latent Class Models}
{Finally, we review} the latent class model of \cite{ManriqueVallier2016}, which is motivated by the modeling of latent individual heterogeneity.

The latent class model decomposes the probabilities $p_x$ defined in \eqref{eq:def_px} through a mixture representation. That is, the model takes the form
\begin{equation}\label{eq:latent_class_model}
p_x = \sum_{k=1}^K w_k \prod_{j=1}^{L} \lambda_{k,j}^{x_j}(1-\lambda_{k,j})^{1-x_j}
\end{equation}
where $w_k$ is the class weight and where $\lambda_{k,j}$ is the $j$th list inclusion probability for class $k$. Any distribution $(p_x)_{x \in \{0,1\}^L}$ can be decomposed through $(3)$ with $K = 2^{L-1}$ \citep{Johndrow2017}.
In terms of the list inclusion variables, this model represents independence between lists conditionally on the latent class to which belongs an individual.

\cite{AleshinGuendel2020, aleshin2021revisiting} showed model \eqref{eq:latent_class_model} induces an assumption of the form \textbf{A3} if and only if $2K \leq L$. When $2K > L$, it follows from \cite{AleshinGuendel2020} and Proposition \ref{prop:1} that no consistent population size estimator results from \eqref{eq:latent_class_model} with unrestricted values of $w_k$ and $\lambda_{k,j}$. 
In other words, given a prior on the latent class model, the resulting population size posterior distribution is generally inconsistent --- the posterior distribution does not converge to the true population size. Instead, such an approach quantifies uncertainty through the prior relationship between $p_{\bm{0}}$ and the other model probabilities. This is valid as long as the mixture of independence model is appropriate for the data, and as long as the prior specification properly captures our uncertainty regarding the distribution of the latent classes and the list inclusion probabilities. The prior specification of \cite{ManriqueVallier2016} for \eqref{eq:latent_class_model} is discussed in Section \ref{sec:estimators}.

\section{Data and Population Size Estimators Under Consideration}\label{sec:data_and_estimators}

{This section introduces the datasets and estimators which we consider throughout the paper. Section \ref{sec:data} reviews the datasets and section \ref{sec:estimators} introduces the estimators based on the models of section \ref{sec:modeling-approaches}. In section \ref{sec:comparison_estimates}, we then showcase how the corresponding estimates compare among themselves and in relation to published estimates. Finally in section \ref{sec:sensitivity_convergence}, we discuss challenges associated with these approaches, notably the sensitivities to choices of hyperparameters and convergence issues.}

\subsection{Data From Past Modern Slavery Studies}\label{sec:data}
{In this section, we summarize all publicly available datasets from past studies on modern slavery that utilized MSE (table \ref{table:datasets}).}
The \textbf{United Kingdom}, \textbf{New Orleans}, \textbf{Netherlands}, and \textbf{Western U.S.} datasets were considered in a recent study of \cite{Silverman2020}. All data considered has been stripped of covariate information, although in some of the original studies (Netherlands and Western U.S.) such covariate data was available to the researchers and used to produce estimates. Furthermore, we only consider a maximum of five lists for each of the datasets, in order to ensure applicability of the decomposable graphical model approach of \cite{Madigan1997, Lum2015}. For the United Kingdom, New Orleans, and Netherlands datasets which originally contained more than five lists, we consider the five lists version proposed in \cite{Silverman2020}. Full details on each dataset can be found in Appendix~\ref{sec:appendix_data}.

\begin{table}
\caption{\label{table:datasets}Datasets under consideration, the timeframe for the collected data, the number of observations, the number and proportion of overlap (observations which appeared in more than one list), and the total number of lists. In order from top to bottom, datasets are from \cite{Silverman2014}, \cite{Bales2019}, \cite{VanDijk2017}, \cite{Farrel2019}, and \cite{Lyneham2019}.}
 \centering
   \begin{tabular}{lcccc}
    \toprule
    Dataset                 & Timeframe     & \# observations   & \# overlap  & \# lists  \\\midrule
    \textbf{United Kingdom}          & 2013          & 2744              & 221 (8.1\%)           & 5        \\
    \textbf{New Orleans}            & 2016          & 185               & 12 (6.5\%)            & 5        \\
    \textbf{Netherlands}             & 2010--2015    & 8234              & 431 (5.2\%)           & 5         \\
    \textbf{Western U.S.}            & 2016          & 345               & 23 (6.7\%)           & 5         \\
    \textbf{Australia}               & 2015–16 to 2016–17 & 414          & 69 (16.7\%)           & 4         \\
    \bottomrule
\end{tabular}
\end{table}%

\subsection{Population Size Estimators}\label{sec:estimators}
{We now introduce} the population size estimators which we evaluate and compare. The choice is motivated by past MSE studies for the quantification of modern slavery, as well as by the recent comparative analysis of \cite{Silverman2020}. First, we consider the approach of 
\cite{Chan2020}, which we refer to as \textit{SparseMSE} following the name of the corresponding R package. This approach was motivated by the studies of \cite{Silverman2014, Bales2015, Bales2019}. It addresses the issues of non-overlapping lists and of model selection uncertainty. Second, we consider the approach of \cite{Madigan1997}, which we refer to as \textit{dga} following its implementation in the {dga} R package of \cite{Lum2015}. This approach was considered in \cite{Silverman2020} and provides a Bayesian model averaging approach to MSE. Third, we consider the approach of \citep{ManriqueVallier2016} which we refer to as \textit{LCMCR} following the name of the corresponding R package. This estimator was also considered in \cite{Silverman2020} and has been used in human rights statistics, leading to recent extensions \citep{Kang2020}. Finally, we consider a simple independence model as a baseline point of reference. Each approach provides point and interval estimates, relying on the modeling approaches reviewed in section \ref{sec:modeling-approaches} for model fitting and inference.

\begin{description}
\item[SparseMSE:] SparseMSE \citep{Chan2020} fits a log-linear model of the form \eqref{eq:log-linear-parameterization} with no three-way or higher interaction terms and with two-way interaction terms selected through forward stepwise $p$-value thresholding. A Poisson likelihood approximation to the multinomial data likelihood is used for mathematical convenience; this does not meaningfully change inferences \citep{Cormack1989}. Extended maximum likelihood is used for parameter estimation, accounting for non-overlapping lists.
The ``bias-corrected and accelerated'' bootstrap procedure of \cite{diciccio1996bootstrap} is used for the construction of confidence intervals while accounting for model selection uncertainty.
Crucially, SparseMSE relies on the assumption of no full-way interaction, meaning that $\gamma = 0$ in the log-linear parameterization \eqref{eq:log-linear-parameterization} of the model probabilities.

%The main tuning parameter of SparseMSE is the $p$-value threshold used for stepwise model selection. \cite{Bales2015} considered a threshold of $0.05$ and \cite{Chan2020} use a default threshold of $0.02$. A smaller threshold leads to bias towards the independence model, whereas a higher threshold allows for the consideration of more complex models with correspondingly higher estimator variance. Sensitivity of estimates to this choice of threshold is explored in section {[XX]}.

%Crucially, SparseMSE relies on the assumption of no full-way interaction, meaning that $\gamma = 0$ in the log-linear parameterization \eqref{eq:log_linear_parameterization} of the model probabilities. {[The consequences of this assumption are considered in ...]}

\item[dga:] The dga approach \citep{Madigan1997, Lum2015} uses decomposable graphical models with hyper-Dirichlet priors and Bayesian model averaging, as described in section \ref{sec:decomposable-graphical-models}, to obtain a population size posterior distribution. By default, ``prior counts'' are set to be constant and with value $2^{-L}$, where $L$ is the number of lists, the prior on the set of decomposable graphs is constant, and the population size prior is the improper prior $p(N) \propto 1/N$. 
%Sensitivity of estimates to these choices is explored in section {[XX]}.
%
As a population size estimator, we consider the median of the posterior distribution. Confidence intervals are obtained by taking equally tailed quantiles of the posterior distribution. 
This approach also makes the assumption of no full-way interaction, meaning that $\gamma = 0$ in the log-linear parameterization \eqref{eq:log-linear-parameterization} of the model probabilities.
%The decomposable graphical modeling approach of \cite{Madigan1997}, as considered here, is a particular case of log-linear modeling with no full-way interaction. It relies on assumptions \textbf{A1} and \textbf{A2}, as well as assumption \textbf{A3.1} introduced in \eqref{eq:assumptionA3.1}. As long as the dependencies between lists can be represented by a conditional independency graph which is not complete (where at least one edge is missing), then assumption \textbf{A3.1} is satisfied.

\item[LCMCR:] LCMCR \citep{ManriqueVallier2016} uses the latent class representation \eqref{eq:latent_class_model} together with a stick-breaking prior on the weights $w_k$ and a uniform prior on the list inclusion probabilities $\lambda_{k,j}$. The stick-breaking prior defines $w_1 \sim \text{Beta}(1,\alpha)$, $w_2 = (1-w_1)v_2$ with $v_2 \sim \text{Beta}(1, \alpha)$, $w_3 = (1-w_2)v_3$ with $v_3 \sim \text{Beta}(1, \alpha)$, and so forth, with $\alpha$ itself being Gamma distributed. By default,  $\alpha \sim \text{Gamma}(0.25, 0.25)$. The population size $N$ is given the default improper prior $p(N) \propto 1/N$. The posterior distribution of $N$ is approximated through conjugate Gibbs sampling. By default, we run 200 randomly initialized chains, each with $100,000$ iterations which are thinned down to $100$ samples. The number of latent classes is limited to a maximum number of $10$ classes to reduce computational burden. We summarize the population size posterior distribution using the posterior median and equally tailed quantiles for confidence intervals.
%
%$2,000,000$ iterations of the Gibbs sampler, discarding half as burn-in and thinning down to $10,000$ samples to approximate the posterior distribution. The number of latent classes is limited to a maximum number of $10$ classes. We summarize the population size posterior distribution using the posterior median and equally-tailed quantiles for confidence intervals.

%It is important to note that the LCMCR estimator makes no assumption of the form \textbf{A3}. The population size posterior distribution is therefore generally inconsistent --- the posterior distribution does not converge to the true population size. Instead, \texttt{LCMCR} quantifies uncertainty through the prior relationship between $p_{\bm{0}}$ and the other model probabilities. This is valid as long as the mixture of independence model is appropriate for the data, and as long as the prior specification properly captures our uncertainty regarding the distribution of the latent classes and the list inclusion probabilities.
\item[Independence:] Additionally, we consider an independence model as a baseline point of reference. This is a Poisson log-linear model with no two-way or higher interaction terms. It is fitted to the data through maximum likelihood, using the \texttt{modelfit()} function of the \texttt{SparseMSE} R package \citep{Chan2020}.
\end{description}

\begin{remark}
    \ob{There are many other models and estimators used for capture-recapture and multiple systems estimation \citep{Otis1978, Amstrup2005, baillargeon2007rcapture, laake2013marked, overstall2014conting, bohning2017capture, Worthington2021}. Our paper focuses on approaches which have previously been used in multiple systems estimation studies for the quantification of modern slavery and which are suited to modern slavery data. In comparison, many capture-recapture models from the population ecology literature require some amount of experimental control to justify strong underlying assumptions such as assumptions of independence between lists. {To our knowledge, experimental control is not present in the modern slavery applications that we consider, which could make the modeling assumptions ill-suited. Thus, we focus on previously-used techniques with more realistic assumptions.}} 
%     and modeling assumptions are difficult to justify, which is why we focus on previously-used techniques in this application.}
\end{remark}

\subsection{Comparison of Estimates}\label{sec:comparison_estimates}

{Figure \ref{fig:real-data-comparison} shows the comparison of the SparseMSE, dga, LCMCR, and Independence estimates on the datasets introduced in Section \ref{sec:data}.}

%In this section, we compare each of the estimates from the SparseMSE, dga, LCMCR, and Independence estimators.
%Figure \ref{fig:real-data-comparison} showcases the estimates applied to the modern slavery datasets introduced in Section \ref{sec:data}. 

All of the approaches considered, except for the independence model, account for model selection uncertainty, whereas published estimates from past studies did not. This explains the very narrow uncertainty in some recent published estimates \ob{when} compared to the estimates of the SparseMSE, dga, and LCMCR approaches. 

Observe that there is general agreement between the estimates. This is particularly pronounced in the case of the New Orleans and Western U.S. datasets for which little overlap between lists is available in the data. Only 12 cases appeared in more than one list in New Orleans, and only 23 cases appeared on more than one list the Western U.S. Without overlap in the data to accurately estimate interaction terms, the regularization implicit to these approaches tends to produce estimates that are in alignment with the independence model estimates. 

%\textcolor{red}{OB: Are you trying to state that all of the estimates are reasonable given the type of data we have here and perhaps one would prefer the ones with uncertainty quantification so that narrow estimates could be avoided in practical use. Then you could state that to explore each model further, you will explore the sensitivity of the methods/models. Give the reader an idea of what you have found more concretely and where you are heading.}

{Overall, the dga and LCMCR estimates tend to be comparable. This is not something which should be expected givent that the dga and LCMCR models rely on different identifying assumptions. However, both model contain the independence model as a particular case. The independence model estimates are similar to most other estimates, although it provides very narrow confidence intervals on larger datasets. The SparseMSE estimates notably differ in the cases of the United Kingdom and of Australia.}

{In section \ref{sec:sensitivity_convergence}, we explore some of the sensitivities of these approaches to tuning parameters which may influence the results. The sensitivity analysis highlights challenges associated with the use of each estimator.}

\begin{figure}[ht!]
    \centering
    \includegraphics{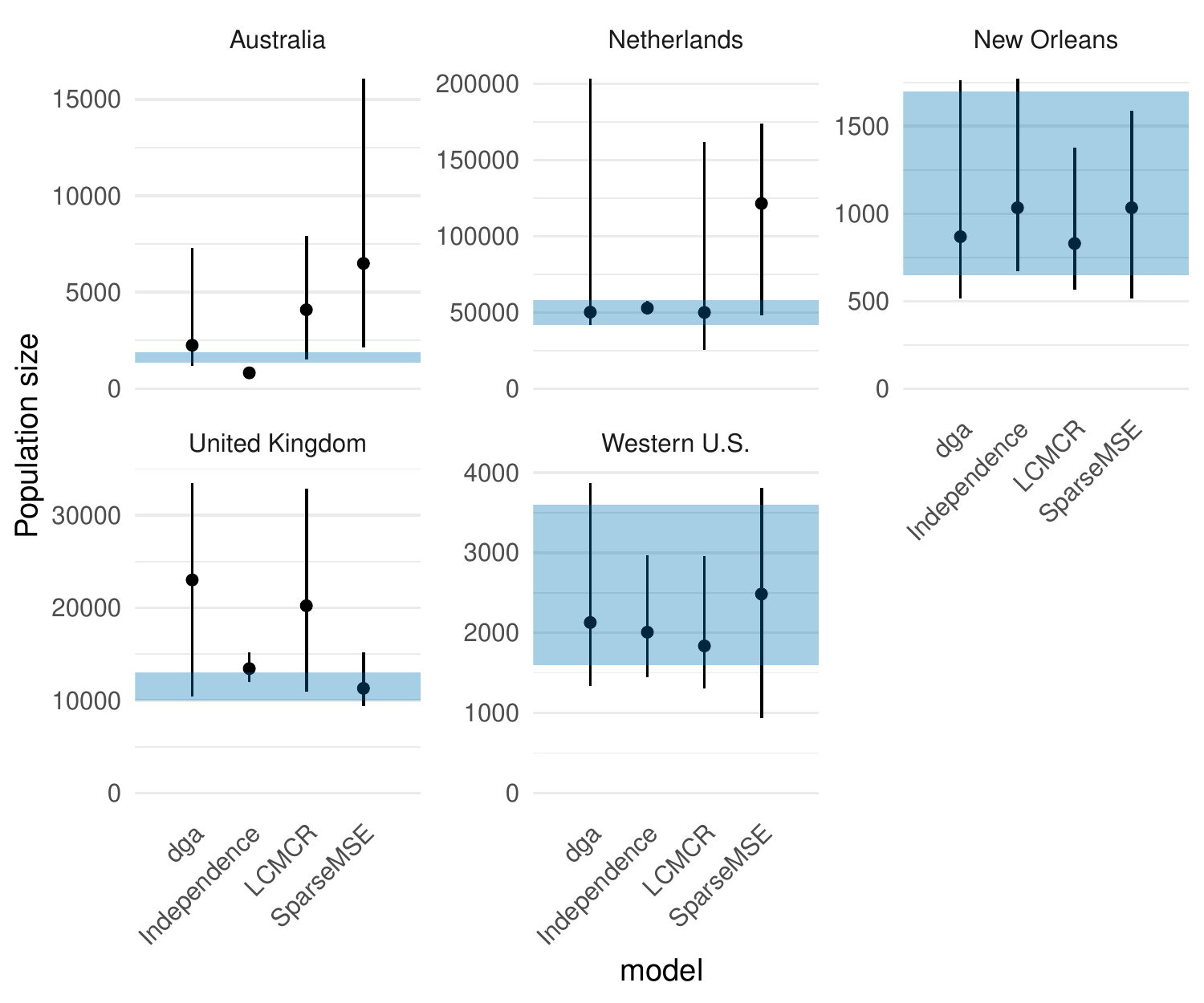}
    \caption{Comparison of the independence model, SparseMSE, LCMCR and dga estimates on the modern slavery datasets described in Section \ref{sec:data}. Vertical line ranges represent 95\% confidence intervals. The blue shaded regions represent the published estimates.}
    \label{fig:real-data-comparison}
\end{figure}

\subsection{Sensitivity and Convergence Issues}\label{sec:sensitivity_convergence}
%\textcolor{red}{OB: Summarize this section please.} \textcolor{red}{I would recommend starting with the sensitivity and ending with convergence issues. What about the independence model?}
{Let us now turn to the sensitivities of the SparseMSE and dga estimates to choices of tuning parameters, as well as convergence issues with LCMCR. The sensitivities and convergence issues can be major challenges in practice, even before potential issues with the data and models are considered. Note that we focus on convergence issue with LCMCR, rather than its sensitivity to choices of priors, as this is, we believe, the most important problem which the approach currently faces. We also ignore the Independence estimator which requires no tuning parameters.

{Throughout, we focus on highlighting issues on datasets for which they are most noticeable or most relevant. Our goal is to showcase issues which \textit{can} happen and which should be evaluated in practice, rather than to provide an analysis for each of the five considered datasets.}}

%\subsubsection{Sensitivity of Independence estimates}

%\textcolor{red}{Are these not sensitive at all? Is there nothing to say regarding these models? It seems that not including some sensitivity analysis is going to be misleading. If they're not sensitive, then wouldn't we use them all the time! You sort of get into this below, but it's not very clear.}

\subsubsection{Sensitivity of SparseMSE Estimates}
{We consider} the main tuning parameter of SparseMSE --- the $p$-value threshold used for stepwise model selection. \cite{Bales2015} used a threshold of $0.05$ and \cite{Chan2020} used a threshold of $0.02$. Intuitively, we would expect smaller threshold to lead to bias towards the independence model, whereas a higher threshold allows for the consideration of more complex models with correspondingly higher estimator variance. This is not necessarily the case. Figure \ref{fig:sparsemse_sensitivity} showcases the SparseMSE estimates on the United Kingdom dataset for $p$-value thresholds between $0$ and $0.1$. While the threshold of $0.02$, used in \cite{Chan2020}, led to the narrow confidence interval of between $10,000$ and $15,000$ potential victims, a very slightly \textit{smaller} threshold leads to the much bigger interval of between $10,000$ and $30,000$ potential victims. This is surprising behavior - smaller thresholds should correspond to higher regularization, but they can unexpectedly produce much larger confidence intervals \ob{due to higher model selection uncertainty.}

\begin{figure}[!ht]
    \centering
    \includegraphics{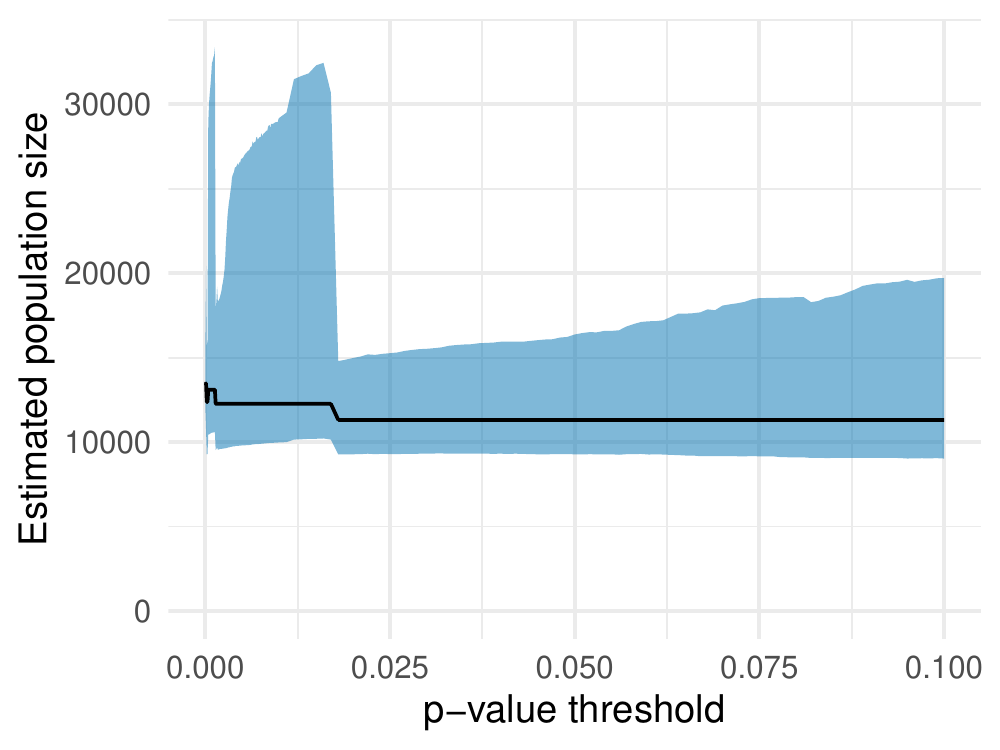}
    \caption{Estimates of the SparseMSE approach on the \textbf{United Kingdom} data, for $p$-value thresholds between $0$ and $0.1$. The black line represents the point estimates and the blue band represents the $95\%$ bootstrap confidence intervals.}
    \label{fig:sparsemse_sensitivity}
\end{figure}

In practice, \ob{larger $p$-value thresholds might be preferable given that they allow the estimation of more complex interactions between lists.} \ob{However, if there is no strong justification for the use of one $p$-value threshold over another,} we recommend that the range of estimates corresponding to different thresholds be investigated and reported in applications.

%there is no strong justification for the use of one $p$-value threshold over another. Given the unexpected sensitivity of estimates to this parameter, we recommend that the range of estimates corresponding to different thresholds be investigated and reported in applications. }

\subsubsection{Sensitivity of dga Estimates}

{Let us now turn to dga estimates. We explore their sensitivity to reasonable choices of hyperparameters or prior distributions.}
There are two prior distributions for which it may be difficult to elicit informative priors and which we focus on. Specifically, we consider the choice of ``prior counts'' which determine the hyper-Dirichlet prior on decomposable graphs and the choice of prior on the set of decomposable graph structures. 

First, regarding prior counts, \cite{Lum2015} proposes the default $\delta = 2^{-L}$ where $L$ is the number of lists. The choice of $\delta = 2^{-L}$ corresponds to the expected count under an independence model for which each list has an inclusion probability of $1/2$. We extend this prior by considering the expected count under an independence model where each list has an inclusion probability $\kappa > 0$. Values $\kappa < 0.5$ corresponds to lower probabilities of inclusion on individual lists, which seems more reasonable in the context of modern slavery data. 

Second, regarding the prior on the graphical structure, we consider a ``small-world'' prior restricted to decomposable graphs. That is, each edge in the graph appears with independent probability $\beta \in (0,1)$, conditionally on the resulting graph being decomposable (and with the complete graph being excluded as well). Larger values of $\beta$ gives more weight to more complex models, while smaller give more weight to less complex models. The default uniform prior corresponds to setting $\beta = 1/2$.

\begin{figure}[!ht]
    \centering
    \includegraphics{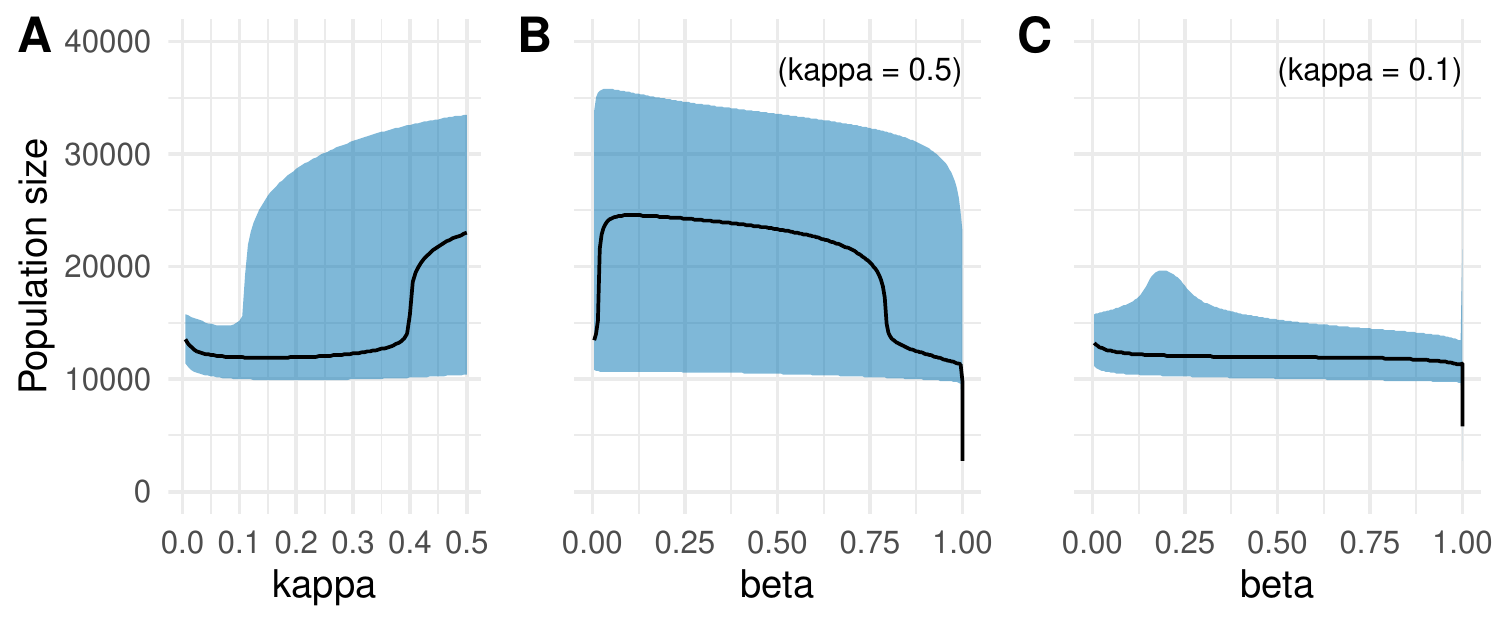}
    \caption{Point estimates and $95\%$ credible intervals {of dga estimates for the \textbf{United Kingdom} dataset,} with varying values of $\kappa$ (which parameterizes the prior counts) and $\beta$ (which parameterizes the distribution on graphical structures).  In panel \textbf{A}, $\beta$ is fixed to a value of $1/2$ and $\kappa$ ranges between $0$ and $0.5$. In panel \textbf{B}, $\kappa = 0.5$ and $\beta$ ranges between $0$ and $1$. In panel \textbf{C}, $\kappa = 0.1$ and $\beta$ ranges between $0$ and $1$. 
    %{Here the range of plausible estimates is comparable to the estimates obtained with default parameters, as in Figure \ref{fig:real-data-comparison}. The priors can still be highly influencial, especially when considering the prior counts parameterized by $\kappa$.}
    }
    \label{fig:dga_sensitivity}
\end{figure}

%\textcolor{red}{Need to state more regarding how sensitive these estimates are to the reader. They're very sensitive to the choice of the tuning parameters.}
Figure \ref{fig:dga_sensitivity} shows estimates resulting from different choices of $\kappa$ and $\beta$ {in application to the United Kingdom dataset. We see that prior choices can quite heavily influence estimates and the width of confidence intervals. In practice, unless a particular prior can be rigorously justified, estimates should be reported for the whole range of plausible prior distributions. This can be viewed as part of an ``objective Bayesian analysis,'' where we acknowledge the difficulty of selecting a prior distribution.} 

\subsubsection{LCMCR Convergence Issues}

%\textcolor{red}{Does LCMCR have any sensitivity issues to tuning parameters? If not, this is worth mentioning as this is a point in its favor. Then one could start out by saying while LCMCR is robust, it unfortunately may have other issues. } All Bayesian approaches are sensitive to tuning parameters - basically you can obtain any result with the "right" choice of prior. But the main issue with LCMCR is the convergence issue. It's kind of pointless to go into sensitivity to prior choice when MCMC doesn't converge, IMHO.

{Like any other Bayesian approach, LCMCR estimates are sensitive prior choices and ranges of reasonable priors should be explored in practice. However, LCMCR faces an important additional challenge -- the MCMC algorithm used to compute estimates does not converge in some cases. This problem is due to the non-identifiability of the latent class model which results in a multimodal posterior distribution. With large datasets, {such as the Netherlands data}, the Gibbs sampling algorithm of \cite{ManriqueVallier2016} can struggle to explore the posterior distribution.}

Figure \ref{fig:lcmcr_trace} provides the Markov chain Monte Carlo (MCMC) samples used to approximate the LCMCR posterior distribution of the non-observation probability in application to the Netherlands dataset. Observe one trace plot for 200 independent chains, where each chain was run for $100,000$ iterations and thinned down to 100 samples. We find that there are two posterior modes and a lack of mixing between them. 
%One trace is plotted for each of the 200 independent chain, where each chain was run for $100,000$ iterations and thinned down to 100 samples.
We provide MCMC convergence diagnostics in table \ref{tab:diagnostics} (left) for the non-observation probability $p_0$, for the number of unobserved individuals $n_{\text{obs}}$, and for the number of latent classes $k^*$. The $\hat R$ value \citep{carpenter2017stan, gelman2013bayesian} of $1.67$ for the non-observation probability, as well as the effective sample size $n_\text{eff}$ of only $340$ for the total number or $20,000$ samples across chains, is witness to non-convergence. Using $20$ chains, each running $1000$ times longer and thinned down to $1,000$ samples, results in a lower effective sample size. Anecdotically, we have not been able to run the Gibbs sampler long enough to observe proper mixing of the non-observation probability.

Given this lack of convergence, we can use a large number of randomly initialized parallel chains to ensure stability of estimates across replications. This explains our default choice of $200$ independent chains in our analyses. Other more sophisticated approaches can be used to deal with peaked and multimodal posteriors which mix poorly, such as parallel tempering \citep{earl2005parallel}, using parallel chains \citep{gelman1992inference}, and more \citep{yao2020stacking}. Implementing these approaches would be necessary for the application of LCMCR to larger datasets. We only encounter this issue for the Netherlands dataset, and given the scope of our paper, we leave this for future work.

%In our context we have only encountered issues with the Netherlands data, and so we do not explore this further.

\begin{figure}[!ht]
    \centering
    \includegraphics{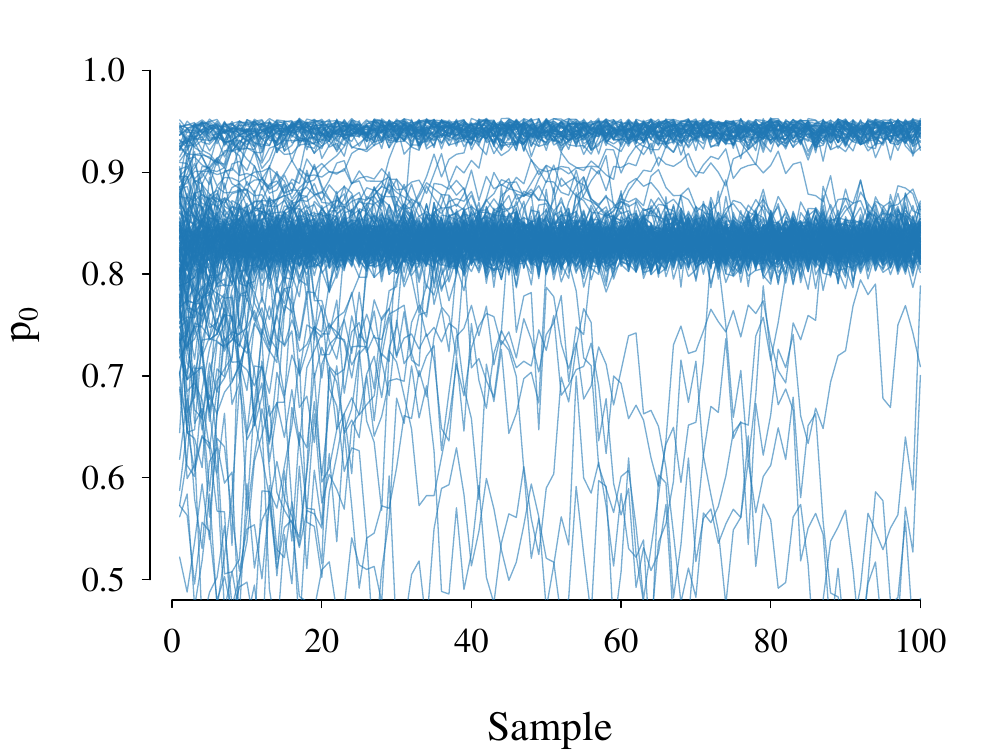}
    \caption{MCMC traces of the non-observation probability $p_0$ for $200$ independent chains using the Gibbs sampler of \cite{ManriqueVallier2016}, applied to the \textbf{Netherlands} dataset. 
    %Notice the two modes of the posterior distribution and the lack of mixing between them.
    }
    \label{fig:lcmcr_trace}
\end{figure}

\begin{table}
\caption{\label{tab:diagnostics}Convergence diagnostics for LCMCR samples aggregated across chains, both for our default settings (left) and for $20$ chains each run $1000$ times longer than by default (right). Here $n_0$ represents the number of unobserved individuals, $p_0$ is the non-observation probability, and $k^*$ is the number of latent classes. The large $\hat R$ values and low effective sample sizes are indicative of poor MCMC mixing.}
\centering
\begin{tabular}{rrrr}
\multicolumn{4}{c}{\textbf{200 chains of $10^5$ iterations}}\\
  \toprule
 & $\hat R$ & $n_\text{eff}$ & $n_\text{samples}$ \\ 
  \midrule
$n_0$ & 1.67 & 340.84 & $2\cdot 10^4$ \\ 
  $p_0$ & 1.67 & 340.48 & $2\cdot 10^4$ \\ 
  $k^*$ & 1.39 & 455.64 & $2\cdot 10^4$ \\ 
   \bottomrule
\end{tabular}
\hspace{0.5in}
\begin{tabular}{rrrr}
\multicolumn{4}{c}{\textbf{20 chains of $10^8$ iterations}}\\
  \toprule
 & $\hat R$ & $n_\text{eff}$ & $n_\text{samples}$ \\ 
  \midrule
  $n_0$ & 1.46 & 38.04 & $2\cdot 10^5$ \\ 
  $p_0$ & 1.46 & 38.03 & $2\cdot 10^5$ \\ 
  $k^*$ & 1.01 & 2338.91 & $2\cdot 10^5$ \\ 
   \bottomrule
\end{tabular}
\end{table}

\section{Internal Consistency Analysis}\label{sec:internal_consistency_analysis}

{We now turn to our first analysis of the accuracy of MSE estimates in application to data from modern slavery studies. This analysis relies on subsets of the data for which ``ground'' truth is available, {which means that the true population size is already known.} This was termed an ``internal consistency analysis'' by \cite{Hook2012} (see also \cite{hook2000accuracy, Brittain2009}). This provides a way to evaluate the accuracy of MSE on relevant datasets. The way in which ground truth data is obtained is described in section \ref{sec:ground_truth_data} and the performance of MSE estimators on this data is described in section \ref{sec:results_internal_consistency}. Limitations of this approach which motivate the rest of our paper are discussed in section \ref{sec:internal_consistency_limitations}.}

%In this section, we evaluate the accuracy of estimators on data where ground truth is available, following ``internal consistency analysis'' \citep{Hook2012}. First,  we describe how data with ground truth is obtained from the datasets considered in Section \ref{sec:data}. Next, we showcase the results of the analysis using performance metrics proposed by \cite{Hook2012}. Finally, we discuss limitations of this approach.
%regarding the generalizability of results.

\subsection{Ground Truth Data Through Conditioning}\label{sec:ground_truth_data}

{To illustrate how we obtain data with ground truth, consider the United Kingdom dataset reproduced in table \ref{table:UKdata}.} This dataset contains five lists, including the local authorities (LA) list. Conditioning on cases being recorded by the LA list and omitting the LA list itself, we obtain the conditioned data shown in table \ref{tab:UK_conditioned}. In addition, we know that a total of $94$ cases have appeared on the LA list. We may therefore attempt to use the conditioned data, which record $40$ cases having appeared on the LA list as well as on other lists, in order to estimate the total of $94$ cases which appeared on the LA list. Here the LA list is our \textit{reference} list, and $94$ is the ground truth population size for the conditioned data.

\begin{table}
\caption{\label{tab:UK_conditioned} United Kingdom dataset conditioned on the LA list. Note that no cases appeared on both the LA list and the GP lists.}
\centering
\begin{tabular}{r|ccc|ccc}
\toprule
  & 15 & 19 & 3 & 1 & 1 & 1\\
\midrule
NG & $\times$ &  &  & $\times$ & $\times$ & $\times$\\
PFNCA &  & $\times$ &  & $\times$ &  & $\times$\\
GO &  &  & $\times$ &  & $\times$ & $\times$\\
GP &  &  &  &  &  & \\
\bottomrule
\end{tabular}
\end{table}

\subsection{Analysis and Results}\label{sec:results_internal_consistency}

The process of using a reference list to obtain conditioned data and a corresponding ground truth is repeated \textbf{for every dataset in table \ref{table:datasets} and for every list.} Datasets with fewer than 30 observations are discarded, resulting in the total of $11$ conditioned datasets described in table \ref{tab:conditioned_data}. The SparseMSE, dga, LCMCR and Independence estimators are then applied to these datasets, and the point estimates $\hat N$ are compared to the ground truth population size $N$, resulting in the log relative bias, 
$
\log (\hat N /N).
$

In table \ref{tab:internal_consistency_result}, we report its empirical mean $\mathbb{E}[\log \hat N /N]$, its root mean square error (RMSE) $\mathbb{E}[(\log \hat N /N)^2)]$, and its median, after removing the outlying results of Netherlands' list K (for which no overlap data is available). Additionally, we report the empirical coverage of $95\%$ confidence intervals. Figure \ref{fig:internal_consistency_results} shows the estimates and ground truth for every conditioned dataset.

\begin{table}
\caption{\label{tab:conditioned_data}Description of conditioned datasets with more than 30 observations.}
\centering
\begin{tabular}{lcccc}
\toprule
Dataset & Reference list & Ground truth & \# observations & \# overlap\\
\midrule
 & LA & 94 & 40 & 3\\
 & NG & 567 & 104 & 7\\
 & PFNCA & 1169 & 174 & 6\\
\multirow{-4}{*}{\raggedright \textbf{United Kingdom}} & GO & 807 & 112 & 6\\
\cmidrule{1-5}
 & IO & 929 & 173 & 13\\
 & K & 1348 & 49 & 0\\
 & P & 4812 & 346 & 14\\
 & R & 742 & 92 & 3\\
\multirow{-5}{*}{\raggedright \textbf{Netherlands}} & Z & 848 & 216 & 12\\
\cmidrule{1-5}
 & B & 77 & 64 & 23\\
\multirow{-2}{*}{\raggedright \textbf{Australia}} & C & 260 & 62 & 22\\
\bottomrule
\end{tabular}
\end{table}

The SparseMSE point estimate appears to perform best, with low mean and median log relative bias. Otherwise, the point estimate are all roughly comparable. Regarding confidence intervals, lower bounds are smaller than the ground truth in all cases (excepted with SparseMSE applied to the Netherlands dataset conditioned on list $K$). It is interesting to note that the Independence model does not perform worse than other approaches.

\begin{table}
\caption{\label{tab:internal_consistency_result}Summary results of the internal consistency analysis. Best results are bolded in each column.}
\centering
\begin{tabular}{lcccc}
  \toprule
Estimator & Mean & RMSE & Median & Coverage \\ 
  \midrule
dga & -0.34& 0.60 & -0.22 & 0.80 \\ 
Independence & -0.29 & \textbf{0.55} & -0.28 & 0.88 \\ 
LCMCR & -0.52 & 0.72 & -0.50 & 0.60 \\ 
SparseMSE & \textbf{-0.17} & 0.63 & \textbf{-0.15} & \textbf{0.90} \\ 
   \bottomrule
\end{tabular}
\end{table}

\begin{figure}[h!]\label{fig:internal_consistency_results}
    \centering
    \includegraphics{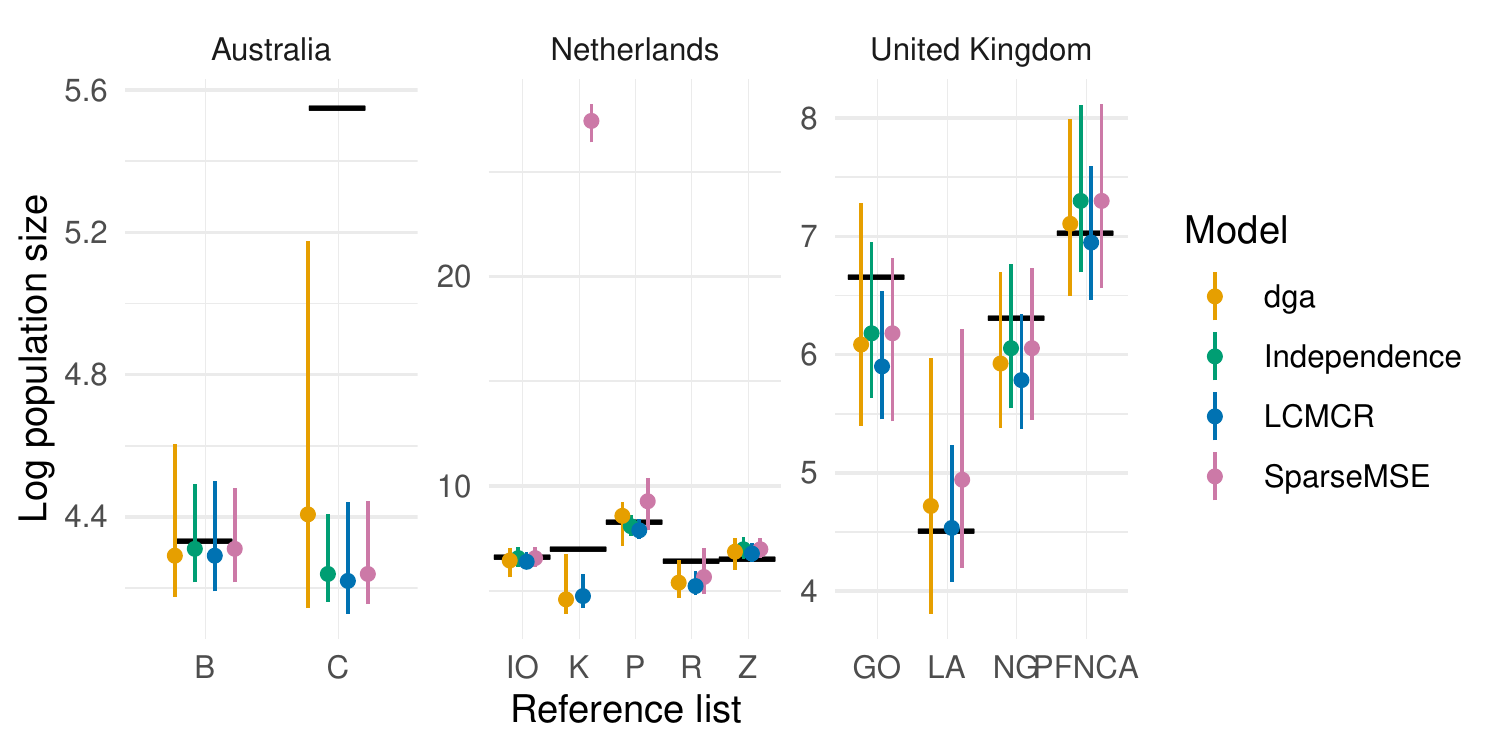}
    \caption{Results of the internal consistency analysis for every considered dataset and reference list. {The black horizontal lines represent ground truth population size. Estimates and $95\%$ confidence intervals are represented by points and vertical lines.}}
    \label{fig:internal_consistency}
\end{figure}

{Overall, the results of the internal consistency analysis are highly encouraging. Lower bounds of the confidence intervals are almost always lower than ground truth, and estimates tend to be close to the ground truth. Coverage of SparseMSE is almost nominal at $90\%$. }

\subsection{Limitations}\label{sec:internal_consistency_limitations}

%\textcolor{red}{In this section, we discuss limitations regarding using this approach, and proposed solutions.}
{The main limitations of the internal consistency analysis are} that the conditioned datasets for which ground truth is available are few in number and are not entirely representative of the modern slavery application. There are two main issues here: (1) the conditioned datasets are small and contain little overlap data; and (2) conditioning on a large list necessarily removes from consideration the features of unobserved individuals. That is, regarding point (2), issues of individual heterogeneity and of certain interaction between lists may be removed by conditioning. Our observations regarding the accuracy of estimates on conditioned datasets may therefore not be generalizable to real applications.

Regarding issue (2), section \ref{sec:large-sample-bias} next evaluates the bias which can be expected in the presence of individual heterogeneity. {That is, we provide a novel characterization of the bias of MSE estimators when underlying assumptions are not satisfied. This characterization is used to compute the bias of estimates under various heterogeneity models.} In order to address issue (1), we propose in section \ref{sec:visual_assessment} a visual resampling technique to  evaluate the robustness of estimates on practical datasets.

\section{Bias Under Misspecified Assumptions}\label{sec:large-sample-bias}

Throughout the paper, we have mentioned how population size estimation relies on an untestable assumption of the form \textbf{A3}. 
In this section, we investigate assumption \textbf{A3.1} {of no full-way interaction term in} the log-linear model.
We consider the set of estimators which are consistent under \textbf{A3.1}, used in modern slavery studies, and we characterize their asymptotic bias when this assumption is misspecified (Theorem \ref{thm:characterization}). In section \ref{sec:bias_heterogeneity}, we consider the consequences of individual heterogeneity on the bias of population size estimators. {We chose heterogeneity models for their relevance to the modern slavery application, as individual characteristics necessarily affect list inclusion probability.} Proposition \ref{prop:2} describes the sign of the bias under a general heterogeneity model. We discuss the case of a Beta heterogeneity model in section \ref{sec:beta-heterogeneity}. Figure \ref{fig:beta_model_multi_list} illustrates the magnitude of the bias under the Beta heterogeneity model as a function of a precision parameter and of the number of lists. The results are summarized in section \ref{sec:theory_summary}.

\subsection{Characterization \ob{of the} Asymptotic Relative Bias}

\ob{A standard} asymptotic framework for MSE \citep{Chao2008} considers the large population limit $N \rightarrow \infty$, where the list inclusion patterns $\{W_i\}_{i=1}^\infty$ (see Section \ref{sec:general_framework}) are independent and have distribution \eqref{eq:def_px}. In addition, the counts $\{n_x\}_{x \not = \bm{0}}$ are now a function of $N$ as specified by \eqref{eq:def_counts}. In this context, we can define the consistency property of population size estimators $\hat N,$ which are functions of $\{n_x\}_{x \not = \bm{0}}$.

\begin{definition}[Consistency]
    A population size estimator $\hat N$ is said to be \textit{consistent}, for the model specified by \eqref{eq:def_px}, if $\hat N /N \rightarrow 1$ almost surely as $N \rightarrow \infty$.
\end{definition}

Departure from consistency is quantified by the asymptotic relative bias.

\begin{definition}[Asymptotic relative bias]
    The asymptotic relative bias of a population size estimator, for the model specified by \eqref{eq:def_px}, is defined as
    \begin{equation}
        \lim_{N \rightarrow \infty} \frac{\hat N - N}{N}
    \end{equation}
    when this limit is well-defined and almost surely constant.
\end{definition}

As noted in Proposition \ref{prop:1}, no population size estimator is consistent for all models. Theorem \ref{thm:characterization} characterizes the asymptotic bias of all estimators which would be consistent under assumption \textbf{A3.1}, when in fact this assumption is misspecified. {Note that all convergence statements are understood to happen with probability one (i.e. almost surely).}

%The proof of the Theorem is in the Appendix.

%Note that no population size estimator is consistent for all choices of probabilities $p_x \geq 0$ in \eqref{eq:def_px}. An assumption, such as the assumption $\gamma = 0$ of no full-way interaction, must be made for the population size $N$ to be estimable. Theorem \ref{thm:characterization} below characterizes the asymptotic relative bias of all estimators which would be consistent if it were indeed the case that $\gamma = 0$. The proof of the Theorem is in the Appendix.

\begin{theorem}[Characterization of the asymptotic relative bias]\label{thm:characterization}
    Let $\hat N$ be any population size estimator which is consistent under assumption \textbf{A3.1} of $\gamma = 0$ (no full-way interaction) in the log-linear representation of the model. Then the relative asymptotic bias of $\hat N$ exists and is given, when $\gamma$ is not necessarily equal to zero, by
    \begin{equation}\label{eq:thm_characterization}
        \lim_{N \rightarrow \infty} \frac{\hat N - N}{N}=p_{\bm{0}}(e^\gamma - 1).
    \end{equation}
\end{theorem}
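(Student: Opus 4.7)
The plan is to reduce the theorem to a large-sample plug-in argument. By the strong law of large numbers applied to \eqref{eq:gen_model_n} and \eqref{eq:gen_model_counts}, $n_x/N \to p_x$ and $n_{\text{obs}}/N \to 1 - p_{\bm{0}}$ almost surely, so the empirical conditional probabilities $\hat q_x := n_x/n_{\text{obs}}$ satisfy $\hat q_x \to q_x$. For any population size estimator $\hat N$, I define $\hat p_{\bm{0}} := 1 - n_{\text{obs}}/\hat N$, so that
\[
\frac{\hat N}{N} \;=\; \frac{n_{\text{obs}}/N}{1 - \hat p_{\bm{0}}} \;\longrightarrow\; \frac{1 - p_{\bm{0}}}{1 - \lim_{N\to\infty} \hat p_{\bm{0}}}.
\]
The task then reduces to identifying the almost-sure limit of $\hat p_{\bm{0}}$ when the data-generating distribution has $\gamma \neq 0$.

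The key step is to argue that, for any estimator $\hat N$ consistent under \textbf{A3.1}, the limit of $\hat p_{\bm{0}}$ depends on $(p_x)$ only through $(q_x)_{x \neq \bm{0}}$, and equals the unique value $p_{\bm{0}}^{*}$ satisfying relation \eqref{eq:assumptionA3.1} with these $q_x$. The justification is identifiability: the conditional law of $(n_x)_{x \neq \bm{0}}$ given $n_{\text{obs}}$ depends only on $(q_x)_{x \neq \bm{0}}$ by \eqref{eq:gen_model_counts}, and $n_{\text{obs}} \to \infty$ irrespective of $p_{\bm{0}}$. A scale-invariant estimator $\hat p_{\bm{0}}$ whose large-sample limit exists along any sequence with $\hat q_x \to q_x$ must therefore be a deterministic function $g((q_x)_{x \neq \bm{0}})$. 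Consistency under \textbf{A3.1} then forces $g((q_x)_{x \neq \bm{0}}) = p_{\bm{0}}^{*}((q_x)_{x \neq \bm{0}})$ for every $(q_x)$, since each such $(q_x)$ arises from some \textbf{A3.1}-compatible law and $p_{\bm{0}}^{*}$ is uniquely pinned down by \eqref{eq:assumptionA3.1}.

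The remaining step is algebra relating $p_{\bm{0}}^{*}$ to the true $p_{\bm{0}}$ and $\gamma$. Substituting $p_x = q_x(1 - p_{\bm{0}})$ into the definition $\gamma = \sum_{x}(-1)^{|x|+1}\log p_x$, splitting off the $x = \bm{0}$ term, and using the elementary identity $\sum_{x \neq \bm{0}}(-1)^{|x|} = -1$, one obtains a relation expressing $\log\frac{p_{\bm{0}}}{1 - p_{\bm{0}}}$ as a function of $\sum_{x \neq \bm{0}}(-1)^{|x|}\log q_x$ and $\gamma$. Applying \eqref{eq:assumptionA3.1} to the pair $(p_{\bm{0}}^{*}, (q_x))$ and subtracting yields the odds-ratio identity $p_{\bm{0}}^{*}(1 - p_{\bm{0}}) = e^\gamma\, p_{\bm{0}}(1 - p_{\bm{0}}^{*})$. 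A one-line manipulation then gives $(1 - p_{\bm{0}})/(1 - p_{\bm{0}}^{*}) = 1 + p_{\bm{0}}(e^\gamma - 1)$, which combined with the first display proves \eqref{eq:thm_characterization}.

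The main obstacle I anticipate is the identifiability argument in the second paragraph: it is not automatic that \emph{every} estimator consistent under \textbf{A3.1} admits a scale-invariant asymptotic limit determined on the \textbf{A3.1} submanifold alone. A rigorous argument will likely need to either restrict attention to estimators that are asymptotically continuous functionals of the empirical inclusion-pattern distribution, or exploit a density/continuity argument to extend consistency from the \textbf{A3.1} submanifold to an open neighborhood of parameter space, from which uniqueness of the extended plug-in rule follows.
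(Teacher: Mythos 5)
Your reduction is the right one and your closing algebra is correct: the target $p_{\bm{0}}^{*}$ obtained by applying \eqref{eq:assumptionA3.1} to the true $(q_x)_{x\neq\bm{0}}$ does satisfy the odds identity $\frac{p_{\bm{0}}^{*}}{1-p_{\bm{0}}^{*}} = e^{\gamma}\frac{p_{\bm{0}}}{1-p_{\bm{0}}}$, and $(1-p_{\bm{0}})/(1-p_{\bm{0}}^{*}) = 1 + p_{\bm{0}}(e^{\gamma}-1)$ then gives \eqref{eq:thm_characterization}. This is exactly the quantity (denoted $p_{\bm{0}}'$) around which the paper's own proof is organized. The genuine gap is the one you flag yourself: nothing in your second paragraph establishes that $\hat p_{\bm{0}} = 1 - n_{\text{obs}}/\hat N$ converges at all under the misspecified law, let alone that its limit is the deterministic function $g((q_x)) = p_{\bm{0}}^{*}$. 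Consistency is assumed only on the \textbf{A3.1} submanifold; an arbitrary measurable $\hat N$ need not be an asymptotically continuous functional of the empirical inclusion-pattern distribution, and $\hat q_x \to q_x$ along the realized sample path does not by itself control $\hat N$. Moreover, both of your proposed repairs (restricting the estimator class, or extending consistency to an open neighbourhood of the submanifold) would prove a strictly weaker theorem than the one stated.

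The paper closes this gap with a coupling rather than a continuity argument, and the mechanism works for \emph{every} estimator. The sequence of distinct observed count vectors --- the count process indexed by $n_{\text{obs}}$ rather than by $N$ --- has a law depending on $(p_x)$ only through $(q_x)_{x\neq\bm{0}}$, and the event $\{\hat N/n_{\text{obs}} \to 1/(1-p_{\bm{0}}^{*})\}$ is measurable with respect to that embedded process. Under the \textbf{A3.1}-compatible model $p_x' = (1-p_{\bm{0}}^{*})q_x$ this event has probability one (consistency of $\hat N$ together with $n_{\text{obs}}/N \to 1-p_{\bm{0}}^{*}$), hence it has probability one under the true model as well, with no regularity assumption on $\hat N$. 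The appendix implements this by an explicit random time change: it inserts i.i.d.\ geometric holding times with mean $1/(1-p_{\bm{0}}')$ to build a random subsequence $N_k$ along which the counts are distributed exactly as a count process with $\gamma'=0$, deduces $\hat N((n_x^{(N_k)}))/k \to 1$ from consistency, and shows $N_k/k \to (1-p_{\bm{0}}')/(1-p_{\bm{0}})$ by the law of large numbers. If you replace your identifiability paragraph with this coupling step, your argument becomes the paper's proof; as written, the middle step is asserted rather than proved.
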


{See Appendix~\ref{sec:proof_thm1} for the proof of Theorem \ref{thm:characterization}.}

\begin{remark}[Lower and upper bound estimators]
Theorem \ref{thm:characterization} shows that population size estimators which would be consistent when $\gamma = 0$ become \textit{lower bound estimators} when $\gamma < 0$ and \textit{upper bound estimators} when $\gamma > 0$. That is, when $\gamma < 0$ or $\gamma > 0$, $\hat N$ becomes a consistent estimator of a portion or of a multiple of $N$.
\end{remark}

Theorem \ref{thm:characterization} can be equivalently expressed as providing a first-order \ob{approximation to population size estimators.}

\begin{corollary}
    Any population size estimator $\hat N$ which is consistent in the absence of full-way interaction term between the lists has the approximation
    \begin{equation}
        \hat N = \left( 1 + \tfrac{p_{\bm{0}}}{1-p_{\bm{0}}}e^\gamma + o(1)\right)n_{\text{obs}}
    \end{equation}
    where $\gamma$ is defined in \eqref{eq:def_gamma_sum} and $o(1)$ is a term which tends to zero as $N \rightarrow \infty$.
\end{corollary}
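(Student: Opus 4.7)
The corollary is an algebraic restatement of Theorem~\ref{thm:characterization} combined with the strong law of large numbers applied to the observed counts. First I would extract the asymptotic ratio $\hat N/N$ from the theorem, then express $N$ in terms of $n_{\text{obs}}$ using the almost sure limit $n_{\text{obs}}/N \to 1 - p_{\bm{0}}$, and finally combine the two via a short algebraic simplification.

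Specifically, Theorem~\ref{thm:characterization} gives $(\hat N - N)/N \to p_{\bm{0}}(e^\gamma - 1)$ almost surely, which rearranges to $\hat N/N = 1 - p_{\bm{0}} + p_{\bm{0}} e^\gamma + o(1)$. Next, $n_{\text{obs}} = \sum_{i=1}^N \mathbb{I}(W_i \not= \bm{0})$ is a sum of i.i.d.\ Bernoulli variables with mean $1 - p_{\bm{0}} > 0$, so the strong law of large numbers yields $n_{\text{obs}}/N \to 1 - p_{\bm{0}}$ almost surely, and therefore $N/n_{\text{obs}} = (1-p_{\bm{0}})^{-1}(1 + o(1))$. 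Writing $\hat N = n_{\text{obs}} \cdot (N/n_{\text{obs}}) \cdot (\hat N/N)$ and substituting yields $\hat N = n_{\text{obs}} (1-p_{\bm{0}})^{-1}\bigl(1 - p_{\bm{0}} + p_{\bm{0}} e^\gamma + o(1)\bigr)$. A one-line algebraic simplification of $(1 - p_{\bm{0}} + p_{\bm{0}} e^\gamma)/(1-p_{\bm{0}}) = 1 + p_{\bm{0}} e^\gamma/(1-p_{\bm{0}})$ then produces the claimed expression.

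The only subtlety is bookkeeping of the $o(1)$ terms: one must check that the product of two almost-sure limits of the form (deterministic constant $+$ $o(1)$) again has that form, which is routine since $1-p_{\bm{0}} > 0$ and the limits are constants. There is no substantive obstacle here; the result is really just a reparameterization of Theorem~\ref{thm:characterization} in which the unobserved denominator $N$ has been replaced by its law-of-large-numbers proxy $n_{\text{obs}}/(1-p_{\bm{0}})$. The genuine mathematical content lives in Theorem~\ref{thm:characterization} itself, and once that theorem is granted this corollary is essentially a two-line consequence.
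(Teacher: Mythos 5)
Your argument is correct and is essentially identical to the paper's own proof: both rearrange Theorem~\ref{thm:characterization} using the almost sure limit $n_{\text{obs}}/N \to 1-p_{\bm{0}}$ and then perform the same one-line algebraic simplification. Your version merely spells out the law-of-large-numbers justification and the $o(1)$ bookkeeping a bit more explicitly than the paper does.
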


\begin{proof}
    {Using the fact that $\lim_{N \rightarrow \infty} n_{\text{obs}}/N = 1 - p_{\bm{0}}$, we can rearrange \eqref{eq:thm_characterization} as
    $$
        \lim_{N \rightarrow \infty} \frac{\hat N}{n_{\text{obs}}} = 1 + \tfrac{p_{\bm{0}}}{1-p_{\bm{0}}}e^\gamma.
    $$
    Equivalently, $\frac{\hat N}{n_{\text{obs}}} = 1 + \tfrac{p_{\bm{0}}}{1-p_{\bm{0}}}e^\gamma + o(1)$ and the result follows directly.}
\end{proof}

\subsection{Bias in the Presence of Individual Heterogeneity}\label{sec:bias_heterogeneity}

%Theorem \ref{thm:characterization} can be used to characterize the asymptotic bias of population size estimators (assuming no highest order interaction term) in a variety of scenarios. 
{We now use Theorem \ref{thm:characterization} to consider the consequences of individual heterogeneity on the bias of estimators which assume no full-way interaction among lists. {Recall that these estimators are the ones being used in the context of modern slavery studies. Since individual heterogeneity is to be expected in this application, it is important to evaluate its practical consequences. In this section, we show that individual heterogeneity is incompatible with the assumption of no full-way interaction among lists. Also, we precisely quantify the effect of reasonable heterogeneity models on the bias of estimates.}}

%
%Previous work on the consequences of individual heterogeneity has been limited to particular cases. For instance, \cite{Hwang2005} show that Horvitz-Thompson type estimators have a downward bias in the presence of individual heterogeneity. In proposition \ref{prop:2}, we show that estimators based on \textbf{A3.1} can have either downward or upward bias under individual heterogeneity. In section \ref{sec:beta-heterogeneity}, this bias is precisely quantified for the particular case of a Beta heterogeneity model.
%
%The general heterogeneity model we consider is the following. 
Assume that each individual $i = 1, 2, \dots, N$ has an individual list appearance probability
\begin{equation}\label{eq:beta_heterogeneity}
    \lambda_i \sim^{ind.} F
\end{equation}
where $F$ is a distribution supported on $(0,1]$, and
\begin{equation}
    \mathbb{P}(W_i = x \mid \lambda_i) = \lambda_i^{\lvert x \rvert} (1-\lambda_i)^{L - \lvert x \rvert}, \quad x = (x_1, \dots, x_L) \in \{0,1\}^L
\end{equation}
where $\lvert x \rvert = \sum_{i=1}^L x_i$.
The inclusion patterns $W_i$ are still independent and they are marginally distributed as
\begin{equation}\label{eq:heterogeneity_model}
    p_x = \mathbb{P}(W_i = x) = \mathbb{E}\left[\lambda_i^{\lvert x \rvert}(1-\lambda_i)^{L-\lvert x \rvert}\right].
\end{equation}
{Using the notations of \cite{Otis1978}, this is termed an $M_{h}$ model allowing individual-specific inclusion probabilities.}

Proposition \ref{prop:2} illustrates some of the consequences of ignoring heterogeneity \ob{for estimators which are consistent under the assumption of no full-way interaction term.}
In the context of two lists, heterogeneity implies a negative bias. With more than two lists, the bias may be positive or negative. \ob{This can be constrasted with the behavior of other classes of estimators. For instance, Horvitz-Thompson type estimators which wrongly ignore heterogeneity \textbf{always} have a negative bias for any number of lists \citep{Hwang2005}.}

\begin{proposition}\label{prop:2}
    Let $\hat N$ be a population size estimator which is consistent under the assumption $\gamma = 0$ in \eqref{eq:def_gamma_sum}. In the context of two lists ($L = 2$) and for the latent heterogeneity model \eqref{eq:heterogeneity_model}, necessarily $\lim_{N \rightarrow \infty} \frac{\hat N - N}{N} \leq 0$. With three or more lists, the asymptotic bias is positive in some cases and negative in others.
\end{proposition}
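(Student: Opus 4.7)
Proof plan for Proposition \ref{prop:2}. The plan is to apply Theorem \ref{thm:characterization}, which reduces the question to determining the sign of the full-way interaction term
\[
\gamma = \sum_{x \in \{0,1\}^L} (-1)^{|x|+1}\log p_x
\]
under the heterogeneity model \eqref{eq:heterogeneity_model}. Since $p_x$ depends on $x$ only through $|x|$, writing $m_k = \mathbb{E}[\lambda^k(1-\lambda)^{L-k}]$ for the moments of $F$ gives the compact expression
\[
\gamma = \sum_{k=0}^{L} \binom{L}{k}(-1)^{k+1}\log m_k.
\]
Because the asymptotic relative bias of $\hat N$ is $p_{\bm 0}(e^\gamma - 1)$ with $p_{\bm 0} = m_0 > 0$, its sign is exactly the sign of $\gamma$.

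For the two-list case ($L = 2$), I would note that the expression simplifies to
\[
\gamma = \log\!\left(\frac{m_1^2}{m_0 m_2}\right) = \log\!\left(\frac{\mathbb{E}[\lambda(1-\lambda)]^2}{\mathbb{E}[\lambda^2]\,\mathbb{E}[(1-\lambda)^2]}\right).
\]
Applying the Cauchy--Schwarz inequality to the random variables $\lambda$ and $1-\lambda$ (both in $L^2$ since bounded), we get $\mathbb{E}[\lambda(1-\lambda)]^2 \leq \mathbb{E}[\lambda^2]\mathbb{E}[(1-\lambda)^2]$, hence $\gamma \leq 0$ and therefore the asymptotic relative bias is non-positive. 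Equality holds precisely when $\lambda$ and $1-\lambda$ are proportional, i.e. when $F$ is degenerate (no heterogeneity), in which case $\gamma = 0$.

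For the case $L \geq 3$, I would verify that the sign is no longer determined by heterogeneity alone by exhibiting two heterogeneity distributions producing opposite signs of $\gamma$. It suffices to treat $L = 3$ (larger $L$ can be handled by appending additional lists with an independent $\mathrm{Bernoulli}(1/2)$ inclusion indicator, which leaves $\gamma$ unchanged by the computation in the proof of Theorem \ref{thm:characterization}). Two concrete two-point mixtures $F = \pi\,\delta_a + (1-\pi)\,\delta_b$ will do the job: I would compute the four moments $m_0, m_1, m_2, m_3$ and the corresponding value of $\gamma$ to exhibit one example with $\gamma > 0$ (say $\pi=0.8$, $a=0.1$, $b=0.9$) and one with $\gamma < 0$ (say $\pi=0.5$, $a=0.1$, $b=0.5$). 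The main thing to verify is merely numerical, so this step is routine.

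The only step that requires insight is isolating the correct inequality for $L=2$; the rest is bookkeeping and explicit evaluation. One subtle point worth noting in the write-up is that the sequence $(m_k)$ is always log-convex (another Cauchy--Schwarz argument gives $m_k^2 \leq m_{k-1}m_{k+1}$), which is why the $L=2$ case pins down the sign of $\gamma$; for $L \geq 3$ log-convexity of $(m_k)$ is not strong enough to constrain the alternating sum $\sum_k (-1)^{k+1}\binom{L}{k}\log m_k$, which is exactly why counterexamples of both signs exist.
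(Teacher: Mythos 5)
Your plan follows the paper's proof almost exactly: invoke Theorem \ref{thm:characterization} to reduce everything to the sign of $\gamma$, obtain $\gamma = \log\bigl(\mathbb{E}[\lambda(1-\lambda)]^2 / (\mathbb{E}[\lambda^2]\,\mathbb{E}[(1-\lambda)^2])\bigr) \leq 0$ for $L=2$ by Cauchy--Schwarz, and exhibit examples of both signs for $L \geq 3$ (the paper's own proof does no more than assert that such examples are easy to find; your explicit two-point mixtures and the log-convexity remark are welcome additions).

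There is, however, one step that fails: the claimed reduction from general $L$ to $L=3$ by appending lists with independent $\mathrm{Bernoulli}(1/2)$ inclusion. First, this does not preserve $\gamma$ --- if list $L+1$ is independent of the others, then $p'_{(x,0)} = p'_{(x,1)} = p_x/2$, and the alternating sum defining the full-way interaction cancels term by term, giving $\gamma' = 0$ regardless of the original $\gamma$. (This is precisely the mechanism behind the paper's remark that conditional independence of one list from another implies \textbf{A3.1}.) Second, the construction leaves the model class: the heterogeneity model \eqref{eq:heterogeneity_model} forces every list to share the same conditional inclusion probability $\lambda_i$, so you cannot graft on a list with a fixed probability $1/2$ and still be proving the stated proposition. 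To cover $L \geq 3$ you should instead produce examples of both signs of $\gamma$ directly for each $L$, e.g.\ within the Beta heterogeneity family, where the paper's own computations (section \ref{sec:beta-heterogeneity}) show the sign flips with $\mathbb{E}[\lambda_i]$ for $L=3$ and the multi-list figure exhibits negative bias; a symmetric choice with $\mathbb{E}[\lambda_i]$ large gives the positive case.
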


\begin{proof}
    From Theorem \ref{thm:characterization}, it suffices to compute $\gamma$ in the context of the heterogeneity model \eqref{eq:heterogeneity_model}. In the two lists setting,
    $$
        \gamma = \log \frac{\left(\mathbb{E}\left[\lambda_i (1-\lambda_i)\right]\right)^2}{\mathbb{E}\left[\lambda_i^2\right] \mathbb{E}\left[ (1-\lambda_i)^2\right]} < 0
    $$
    by the Cauchy-Schwartz inequality, and hence $\hat N$ is negatively biased, asymptotically. With three and four lists, it is easy to find examples where the bias is positive or negative.
\end{proof}

\subsubsection{Beta Heterogeneity Model}\label{sec:beta-heterogeneity} 

In order to make Proposition \ref{prop:2} more concrete, consider the case where
$$
    \lambda_i \sim^{i.i.d.} \text{Beta}(a, b),
$$
and again
\begin{equation}
    \mathbb{P}(W_i = x \mid \lambda_i) = \prod_{i=1}^L \lambda_i^{x_i}(1-\lambda_i)^{1-x_i}, \quad x = (x_1, \dots, x_L) \in \{0,1\}^L.
\end{equation}
The inclusion patterns $W_i$ are marginally distributed as
\begin{equation}
    p_x = \mathbb{P}(W_i = x) \propto \Gamma(a + \lvert x \rvert) \Gamma(b + L - \lvert x \rvert).
\end{equation}
Furthermore, 
\begin{equation}\label{eq:beta_approximation}
    p_{\bm{0}} = \frac{\Gamma(a+b)\Gamma(b+L)}{\Gamma(b) \Gamma(a+b+L)} \approx \left(\frac{b}{a+b}\right)^L
\end{equation}
and
$$
    \gamma = -\sum_{k=0}^L (-1)^k {L \choose k} \log\left(\Gamma(a +k) \Gamma(b + L - k)\right).
$$
%Applying Theorem \ref{thm:characterization}, the asymptotic relative bias of any population size estimator which wrongly assumes $\gamma = 0$ is then
%\begin{equation}
%    \lim_{N \rightarrow \infty} \frac{\hat N - N}{N} = p_{\bm{0}}(e^\gamma - 1).
%\end{equation}

\paragraph{Two-Lists Beta Model}
In the context of two lists, where $L=2$, Theorem \ref{thm:characterization} simplifies as
$$
  \lim_{N \rightarrow \infty} \frac{\hat N - N}{N} = - p_{\bm 0}\left( \frac{a+b+1}{(a+1)(b+1)}\right) \leq - p_{\bm 0}\left( \max\left\{\frac{1}{a+1}, \frac{1}{b+1}\right\} \right) \leq 0.
$$
For example, with $a=1$ and $b=8$, it follows that $20\%$ of the cases are observed on average. The asymptotic relative bias is $-4/9$ and $\hat N \approx \frac{5}{9}N$. As $a \rightarrow 0$, the asymptotic relative bias tends towards $-100\%$. 

\paragraph{Three-Lists Beta Model}

In the context of three lists, we obtain
$$
    \gamma = \log \left(\frac{a(b+1)^2(a+2)}{b(a+1)^2(b+2)}\right).
$$
This is positive when $\mathbb{E}[\lambda_i] > 1/2$ and negative when $\mathbb{E}[\lambda_i] < 1/2$; there is a positive bias in the first case and a negative bias in the second. Note, however, that this simple expression for the sign of the bias does not hold outside of the Beta model. In general, the sign of the bias is also linked to higher moments of $\lambda_i$.

\paragraph{Multi-Lists Beta Model}

Now consider a Beta model with $L$ lists, where using \eqref{eq:beta_approximation} we fix $p_{\bm{0}} \approx (b/(a+b))^L = 3/4$ and we let the precision parameter $a+b$ of the list inclusion probabilities $\lambda_i$ vary. As $a+b$ tends to infinity, individual heterogeneity is reduced, while small positive values of $a+b$ represent high heterogeneity.

Figure \ref{fig:beta_model_multi_list} shows the asymptotic relative bias of population size estimators as a function of the precision $a+b$ of the Beta distribution and of the number of lists $L$. Note that there is a significant reduction of the relative bias when going from two to three lists. However, differences between using three to six lists are negligible. Furthermore, even for reasonably low heterogeneity levels ($a+b \approx 5$, meaning a standard deviation for the list inclusion probability of about $0.12$ when $L=3$), we find a relative bias of about $-50\%$. This means that estimates will be two times too small.

\begin{figure}[!ht]
    \centering
    \includegraphics{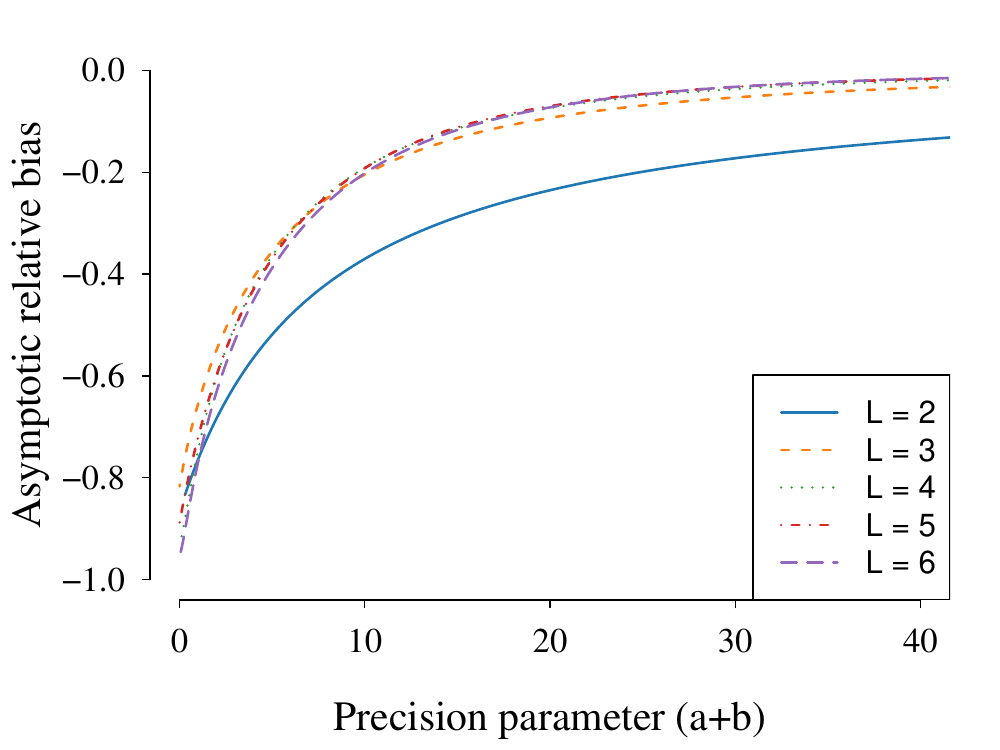}
    \caption{Asymptotic relative bias of population size estimators (assuming no highest order interaction) as a function of the precision parameter $a+b$ in the Beta model formulation and where we have fixed $p_{\bm{0}}\approx (b/(a+b))^L = 3/4$. 
    %{Even for reasonably low heterogeneity levels ($a+b \approx 5$, meaning a standard deviation for the list inclusion probability of about $0.12$ when $L=3$), we find a relative bias of about $-50\%$; estimates will be two times too small. Going from two to three lists reduces the effect of heterogeneity. However, the differences between the use of three to six lists are negligeable.}
    }
    \label{fig:beta_model_multi_list}
\end{figure}

\subsection{Summary}\label{sec:theory_summary}

{We have demonstrated how Theorem \ref{thm:characterization} can be used to evaluate the bias of estimates when the data has characteristics which break the assumption of no full-way interaction term. In particular, individual heterogeneity can result in substantial bias which can be either positive or negative. Furthermore, using more than three lists does not substantially reduce the consequence of individual heterogeneity under the Beta model which we considered.}

{The issue of individual heterogeneity has been addressed in some past studies through stratification or modeling. However, other issues remain even if heterogeneity can be accounted for. The most important might be observer effets. For example, an individual observed by one list might cause it to not appear on any other, or there might be systematic referal mechanisms between organizations. Theorem \ref{thm:characterization} provides an avenue to evaluate the consequences of these data characteristics on the bias (and ultimately accuracy) of estimates.}

\section{Visual Assessment of Robustness}\label{sec:visual_assessment}

Resampling techniques, including model-based and nonparametric bootstrapping approaches, provide powerful tools for the analysis of estimator properties in application to real data \citep{efron1982jackknife}. In particular, they may be used to estimate the bias of point estimates and the coverage of confidence intervals. However, such analyses rely on technical assumptions \citep{hall2013bootstrap}. They can be sensitive to modeling assumptions, and theoretical guarantees are only valid in large samples (including the need for large amounts of overlap data). In order to avoid any controversy of the kind found in \cite{Whitehead2019} and \cite{Vincent2020} regarding the setup of such experiments, we propose a simple visual assessment of estimate robustness and reliability. Our goal is that this visual assessment will be non-controversial and meaningful in practical applications. Our proposal is described in Section \ref{sec:visual_assessment_intro}. Section \ref{sec:visual_application} applies our tool to the aforementioned modern slavery datasets.

\subsection{Visualizing Estimate Trajectories}\label{sec:visual_assessment_intro}
{We propose to consider} the series of estimates which would have been obtained if the data had been collected sequentially. \ob{That is, we consider} series of estimates obtained as \ob{a function of the number of observed individuals. While this depends on the (unknown) order in which individuals have been observed, we may sample the order at random in order to obtain representative samples.} The series can \ob{also} be extended beyond the total size of the reported dataset through resampling. 

The behavior of the series may be indicative of convergence towards a stable estimate, or, if it is largely unstable, this can point to potential issues. \ob{Its behavior} can also be compared to what would be expected under a simple independence model fitted to the data, or under a more complex model fitted to the data. Differences between the behaviors of the series would then indicate a lack of fit of these simple or more complex models.

\ob{To be more precise, consider a dataset $\mathcal{D} = \{W_i\}_{i=1}^n$ of $n$ observations, where each $W_i \in \{0,1\}^L$ is an observed list inclusion pattern. From this dataset, we construct a series $\{Z_{i}\}_{i=1}^{n}$ which represents a hypothetical ordering of the observations in $\mathcal{D}$. This is obtained by choosing a permutation $\sigma$ of $\{1,2,\dots, n\}$ at random and setting $Z_i = W_{\sigma(i)}$. Furthermore, the series $Z_i$ is extended to $2n \geq i > n$ by choosing a second random permutation $\pi$ and setting $Z_{n+i} = W_{\pi(i)}$ for $1 \leq i \leq n$. This series $\{Z_i\}$ represents an hypothetical sample path of list inclusion pattern. The corresponding population size estimates, each computed using the first $n_{\text{obs}}$ data points $\{Z_i\}_{i=1}^{n_{\text{obs}}}$, are the \textit{estimate trajectories} which we focus on.}

%Each observation corresponds to an individual and these are ordered at random. For every integer $n_{\text{obs}}$ between $n/2$ and $n$, we compute the Independence, SparseMSE, dga, and LCMCR estimates corresponding to data from the first $n_{\text{obs}}$ individuals. Now, regarding the extension of the series, we consider a second random ordering of the $n$ individuals. Then, for integers $n_{\text{obs}}$ between $n$ and $2 n$, we compute these estimates for the data from the first $n$ individuals combined with data from the first $n_{\text{obs}} - n$ individuals from the second ordering. In this way, we obtain a series of estimates indexed by integers $n_{\text{obs}} \in (n/2, 2n)$. At point $n_{\text{obs}} = n$, these are the real data estimates. At point $n_{\text{obs}} = 2n$, these are the estimates on the data with doubled counts.

Figure \ref{fig:empirical_traj_example} shows such trajectories of dga estimates on data from an independence model fitted to the United Kingdom dataset.  That is, we have simulated a single dataset from the independence model and then applied {our proposed procedure to it} which provides different estimate trajectories for this data. Panel \textbf{A} shows a single trajectory with point estimates and confidence intervals. Panel \textbf{B} shows point estimate trajectories corresponding to 50 random orderings. \ob{The horizontal dotted line represents ground truth. The behaviour of these trajectories can be considered a best case scenario, given that the data came from a simple independence model. This can be compared with the application to real data in the following section.}

\begin{figure}[h!]
    \centering
    \includegraphics{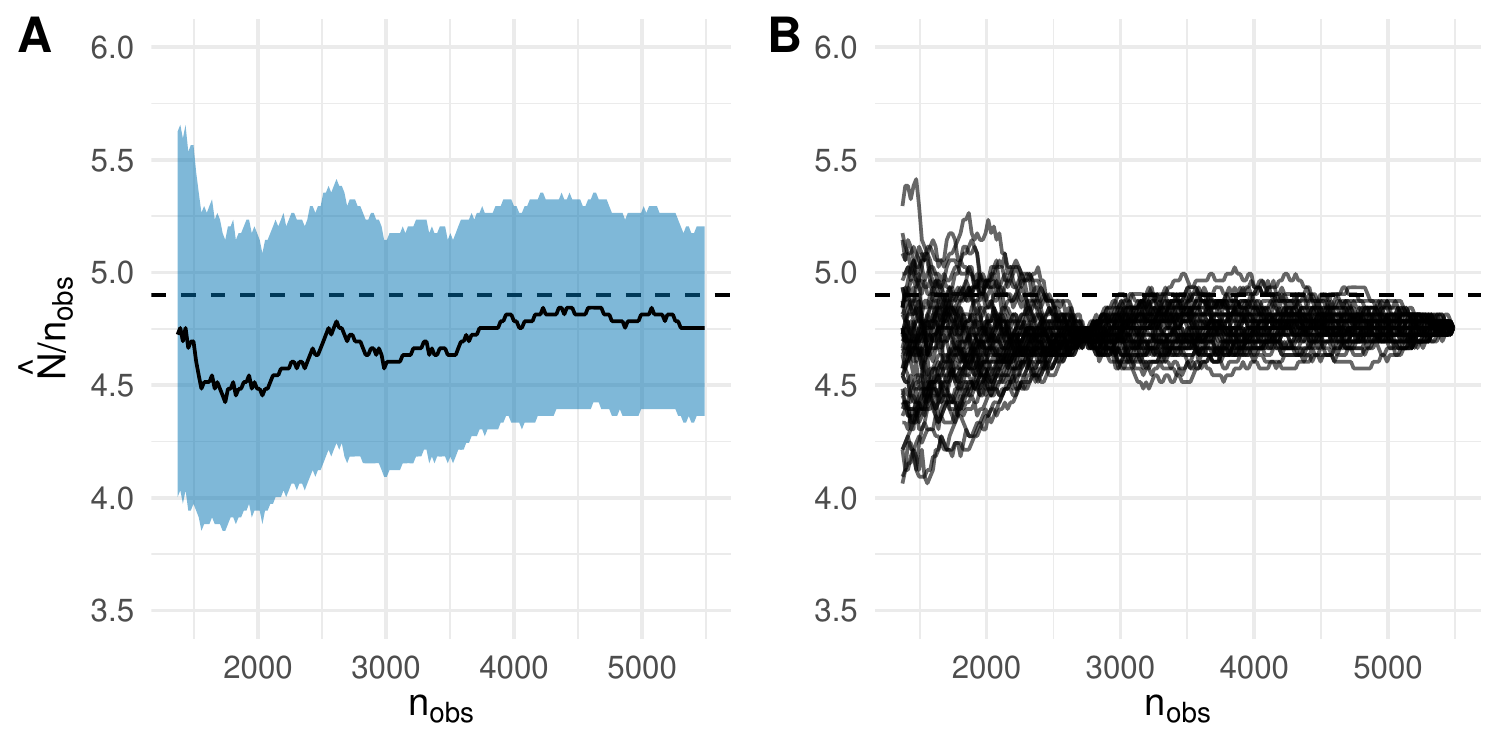}
    \caption{Trajectories of dga estimates (as the ratio of estimated population size to number of observations) on data simulated from and independence model fit to the United Kingdom dataset. The horizontal line represents ground truth population size. Panel \textbf{A} shows a single trajectory together with the $95\%$ credible intervals. Panel \textbf{B} shows $50$ random trajectories for the same dataset. 
    %{This figure illustrates the best case scenario of data from an independence model, using the dga estimator. It should be compared to Figure \ref{fig:empirical_traj_UK} and Figure \ref{fig:empirical_traj_Ned}.}
    }
    \label{fig:empirical_traj_example}
\end{figure}

%\textcolor{red}{Can you describe why this Figure is useful to a user.}

%{The figure illustrates the kind of behavior that can be expected in the \textit{best case scenario} of an independence model for the data and using the dga estimator. This should be compared with the application to real data in the following section.}

\subsection{Application to Real Data} \label{sec:visual_application}

We now present the result of our visualization {in application to the United Kingdom and Netherlands datasets. We focus on these two datasets because they are the largest and they are the ones for which sensitvity to individual observations is most noticeable.}

Figure \ref{fig:empirical_traj_UK} shows {trajectories of estimates} in application to the United Kingdom dataset. Figure \ref{fig:empirical_traj_Ned} shows trajectories of dga and SparseMSE estimates in application to the Netherlands dataset. {LCMCR and independence estimate trajectories have been omitted from figure \ref{fig:empirical_traj_Ned} since, like in the case of the United Kingdom data, they did not showcase high sensitivites.} The thin vertical lines indicate the number of observations at which the estimate trajectories coincides with real data estimates.

\begin{figure}[!h]
    \centering
    \includegraphics{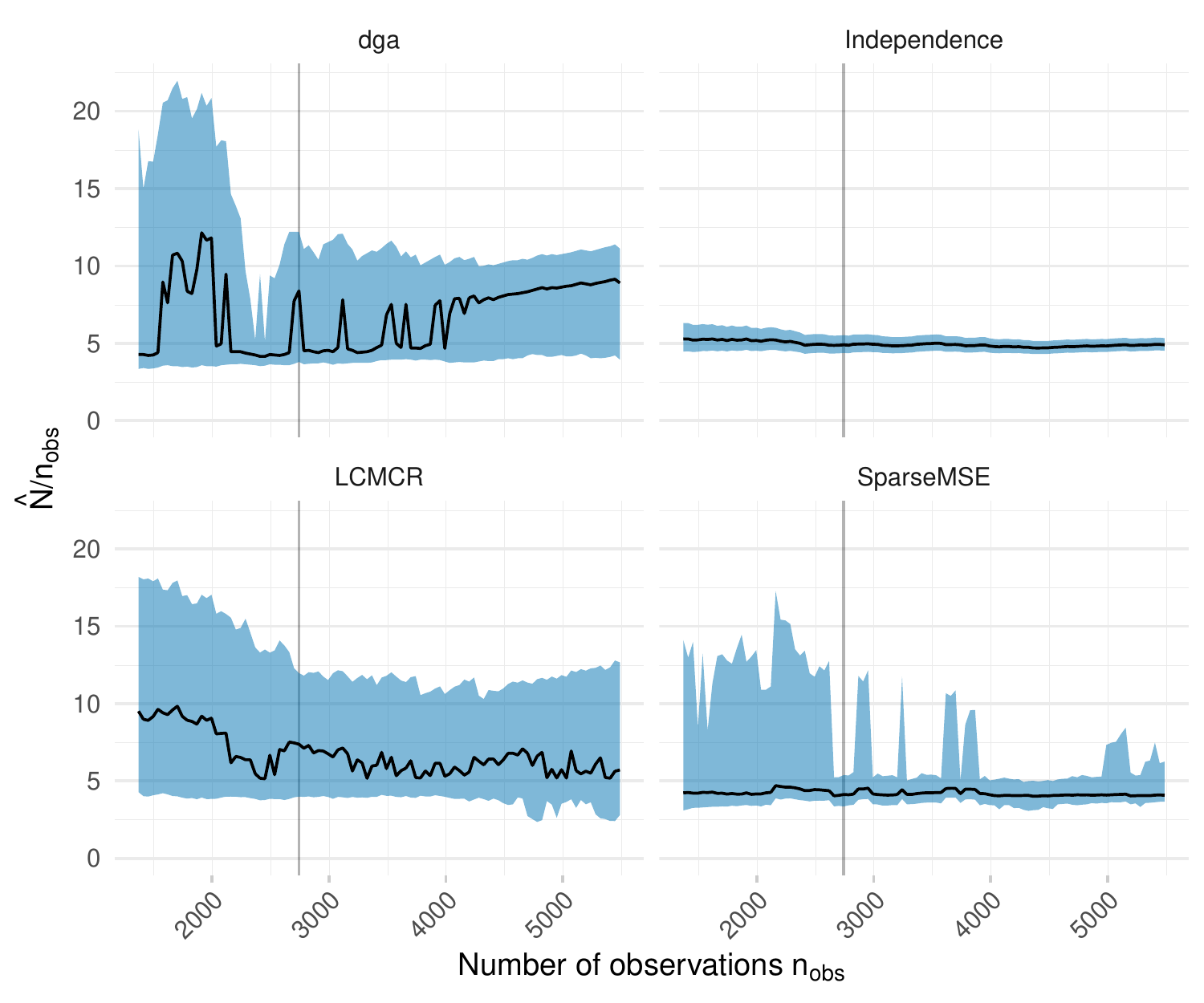}
    \caption{Visualization of estimate trajectories (as the ratio of estimated population size to number of observations) for the \textbf{United Kingdom} dataset. The horizontal line represents ground truth population size. The thin vertical lines indicate the number of observations in the United Kingdom dataset, at which the trajectory estimates coincide with real data estimates. 
    %{Notice the instabilities of the dga and SparseMSE estimates, when compared to LCMCR and Independence estimates. In particular, the SparseMSE confidence interval for the United Kingdom dataset appears to be too narrow.}
    }
    \label{fig:empirical_traj_UK}
\end{figure}

\begin{figure}[!h]
    \centering
    \includegraphics{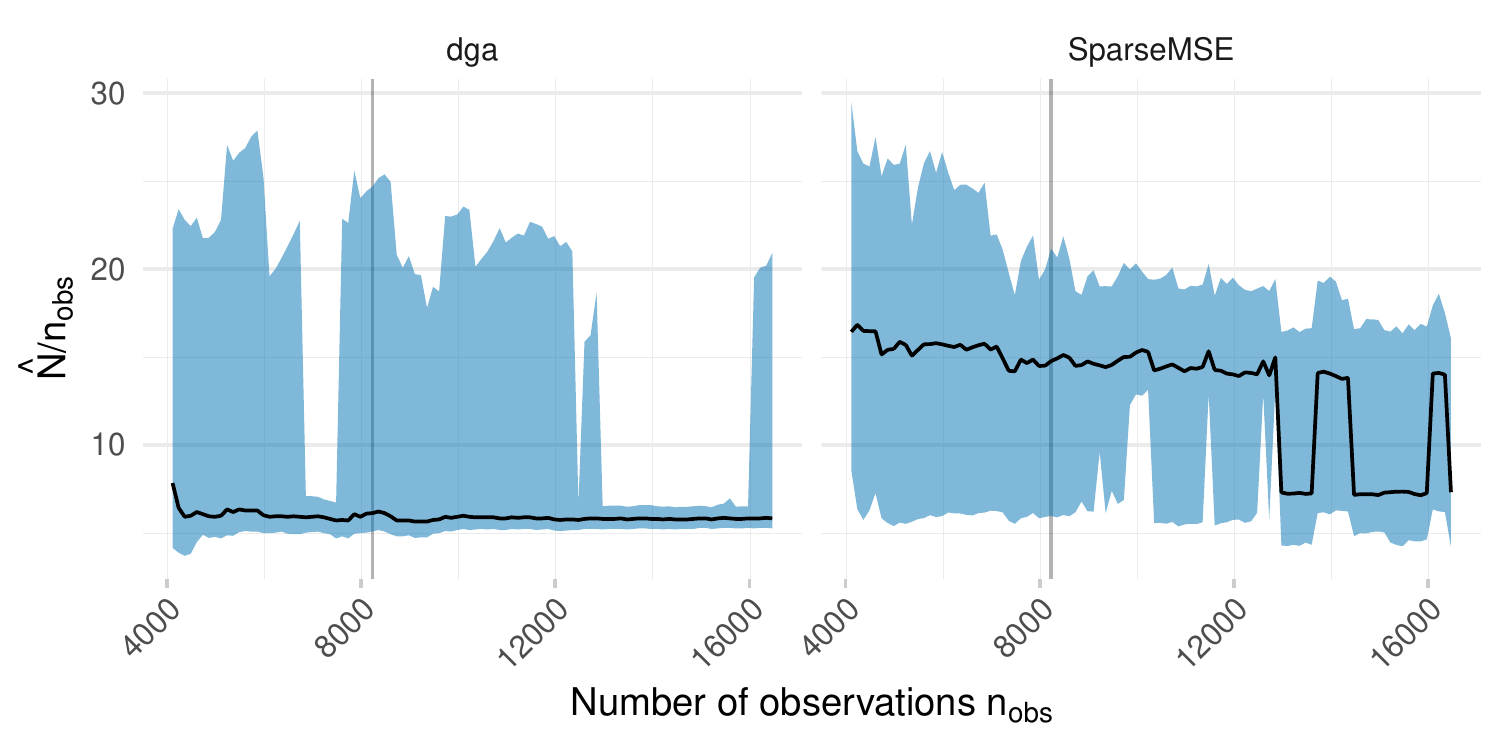}
    \caption{Visualization of estimate trajectories (as the ratio of estimated population size to number of observations) for the \textbf{Netherlands} dataset and for the dga and SparseMSE estimators. The thin vertical lines indicate the number of observations in the United Kingdom dataset, at which the trajectory estimates coincide with real data estimates. 
    %{While the real data estimates seem to properly capture uncertainty, estimates are still sensitive to the data in this case.}
    }
    \label{fig:empirical_traj_Ned}
\end{figure}

Looking at Figure \ref{fig:empirical_traj_UK}, the Independence and LCMCR estimates appear quite stable on the United Kingdom dataset. On the other hand, the dga and SparseMSE estimates are much more sensitive to individual observations. Regarding dga estimates, while $95\%$ credible intervals are relatively stable, the point estimates (posterior median) can significantly change as the result of observing only a few additional cases. Regarding SparseMSE estimates, the narrow confidence interval obtained on the United Kingdom dataset appears to be a fluke. With only a few less or a few more observations, the confidence interval becomes much larger, to between around 4 and 12 times the number of observed cases. This suggests that bootstrap confidence intervals used by SparseMSE may sometimes fail to properly account for uncertainty.

In the case of the Netherlands dataset shown in Figure \ref{fig:empirical_traj_Ned}, the behavior of the dga and sparseMSE estimates is quite different. The $95\%$ credible regions of dga now fluctuate much more, while SparseMSE point estimates are less stable in larger samples.

{In practice, the estimate trajectories can help diagnose lack of robustness. In the case of the SparseMSE estimate for the United Kingdom dataset, for instance, Figure \ref{fig:empirical_traj_UK} shows strong sensitivities and confidence intervals which are too narrow. This should be addressed by considering a broader range of plausible estimates.}

\clearpage
\newpage
\section{Discussion}\label{sec:discussion}
MSE has unique potential in helping assess the true scale of modern slavery. \ob{However, long-standing controversy in the literature} have come in the way of the broader implementation of MSE methodology.
%While MSE has the potential to address many applications of modern slavery, there are controversies in the literature that have prevented its broader implementation. 
As such, we address three major issues debated recently, namely, 
statistical aspects of MSE assumptions, robustness, and accuracy. First, we review the current state of the MSE literature, commonly used methods, and all publicly available modern slavery datasets. Next, we provide a reproducible analyses evaluating the accuracy of estimates, the consequences of MSE assumptions, and the robustness of estimates to small changes in data. Specifically, we utilize the internal consistency approach of \cite{Hook2012} to evaluate the accuracy of estimates when ground truth is available (section \ref{sec:internal_consistency_analysis}). Then, we assess the consequences of MSE assumptions  through a novel characterization of large sample bias (section \ref{sec:large-sample-bias}). Finally, we propose  a visual assessment of reliability and robustness (section \ref{sec:visual_assessment}). Our work highlights important practical and methodological challenges with MSE, which we summarize below. In addition, we comment on future  methodological research.

\paragraph{Practical Challenges and Recommendations}
Our work highlights some important statistical challenges practitioners face when using MSE to quantify modern slavery. First, we have shown in section \ref{sec:sensitivity_convergence} how estimates can be highly sensitive to the choice of tuning parameters. Due to this, we recommend for sensitivity analyses to be conducted, evaluated, and reported in all applications. Second, we have shown (section \ref{sec:visual_assessment}) that estimates can be highly sensitive to individual data points. Our proposed visualization of estimate trajectories can be used by practitioners as a diagnostic tool moving forward in analyses. Third, we have shown  (section \ref{sec:large-sample-bias}) that assumptions underlying MSE are highly influential. In particular, we have quantified how reasonable individual heterogeneity models can affect the bias of estimates from models used in practice. Thus, we recommend that users report, discuss, and justify underlying assumptions in their work moving forward. Ideally, enough information about data collection should be provided for these assumptions to be scrutinized. To summarize, this provides a set of simple guidelines and recommendations that can be used in MSE studies, complementing other recommendations that have been put forth \citep{Hook1999}.

\paragraph{Future Methodological Research}

The statistical challenges which we have highlighted point at a problem of \textit{underspecification} \citep{Damour2020underspecification}. In our context, this means that many seemingly reasonable approaches to MSE can give different results. There are two main reasons for this. First, there is the choice of underlying assumption of the form \textbf{A3} which we have discussed and shown to be highly influential. \cite{aleshin2021revisiting} has provided first steps towards emphasizing the role of these assumptions, but further work is needed to assess their suitability in real applications and to propose meaningful alternatives. Second, even if an assumption of the form \textbf{A3} can be justified, we have shown how estimates can be highly sensitive to tuning parameters and individual data points. That is, small errors or changes in data and arbitrary choices in model fitting procedures can lead to vastly different estimates. Methodology accounting for noise in the data, record linkage errors, and other sources of uncertainty, are needed to address these issues.

%Furthermore, we emphasize that have only scratched the surface of specifically \textit{statistical} issues involved with MSE. We have not considered the use of covariate information in MSE studies and we have limited our study of misspecified assumptions to the case of individual heterogeneity. Much more can also be considered regarding data collection practices, data security and privacy, data integration through record linkage, and the use of complementary information regarding data sources and their interdependencies. Additionally, there are issues of 

\paragraph{Software} Code and data for all analyses presented in this paper are available at \url{http://github.com/OlivierBinette/MSETools}

\paragraph{Appendix} The appendix contains summaries of the datasets considered in this paper and proofs which were omitted from the main text.

\clearpage
\newpage

\appendix

\bibliographystyle{chicago}

\begingroup
\raggedright
    \bibliography{MSE}

\begin{thebibliography}{}

\bibitem[\protect\citeauthoryear{Aleshin-Guendel}{Aleshin-Guendel}{2020}]{AleshinGuendel2020}
Aleshin-Guendel, S. (2020).
\newblock On the identifiability of latent class models for multiple-systems
  estimation.
\newblock {\em arXiv e-prints\/}, 1--8.
\newblock arXiv:2008.09865.

\bibitem[\protect\citeauthoryear{Aleshin-Guendel, Sadinle, and
  Wakefield}{Aleshin-Guendel et~al.}{2021}]{aleshin2021revisiting}
Aleshin-Guendel, S., M.~Sadinle, and J.~Wakefield (2021).
\newblock Revisiting identifying assumptions for population size estimation.
\newblock {\em arXiv e-prints\/}.
\newblock arXiv:2101.09304.

\bibitem[\protect\citeauthoryear{Allain}{Allain}{2017}]{Allain2017}
Allain, J. (2017).
\newblock Contemporary slavery and its definition in law.
\newblock In A.~Bunting and J.~Quirk (Eds.), {\em Contemporary Slavery}, pp.\
  36--66. University of British Columbia Press.

\bibitem[\protect\citeauthoryear{Amstrup, McDonald, and Manly}{Amstrup
  et~al.}{2005}]{Amstrup2005}
Amstrup, S.~C., T.~L. McDonald, and B.~F.~J. Manly (2005).
\newblock {\em Handbook of capture-recapture analysis}.
\newblock Princeton, N.J.: Princeton University Press.

\bibitem[\protect\citeauthoryear{Baillargeon, Rivest, et~al.}{Baillargeon
  et~al.}{2007}]{baillargeon2007rcapture}
Baillargeon, S., L.-P. Rivest, et~al. (2007).
\newblock Rcapture: loglinear models for capture-recapture in {R}.
\newblock {\em Journal of Statistical Software\/}~{\em 19\/}(5), 1--31.

\bibitem[\protect\citeauthoryear{Bales, Hesketh, and Silverman}{Bales
  et~al.}{2015}]{Bales2015}
Bales, K., O.~Hesketh, and B.~W. Silverman (2015).
\newblock Modern slavery in the {UK}: How many victims?
\newblock {\em Significance\/}~{\em 12\/}(3), 16--21.

\bibitem[\protect\citeauthoryear{Bales, Murphy, and Silverman}{Bales
  et~al.}{2019}]{Bales2019}
Bales, K., L.~T. Murphy, and B.~W. Silverman (2019).
\newblock How many trafficked people are there in {Greater New Orleans}?
  {L}essons in measurement.
\newblock {\em Journal of Human Trafficking\/}~{\em 6}, 375--387.

\bibitem[\protect\citeauthoryear{{Binette} and Steorts}{{Binette} and
  Steorts}{2020}]{Binette2020}
{Binette}, O. and R.~C. Steorts (2020).
\newblock ({A}lmost) all of entity resolution.
\newblock {\em arXiv e-prints\/}, 1--53.
\newblock arxiv:2008.04443.

\bibitem[\protect\citeauthoryear{Bird and King}{Bird and King}{2018}]{Bird2018}
Bird, S.~M. and R.~King (2018).
\newblock Multiple systems estimation (or capture-recapture estimation) to
  inform public policy.
\newblock {\em Annual Review of Statistics and Its Application\/}~{\em 5\/}(1),
  95--118.

\bibitem[\protect\citeauthoryear{Bishop, Fienberg, and Holland}{Bishop
  et~al.}{2007}]{bishop2007discrete}
Bishop, Y.~M., S.~E. Fienberg, and P.~W. Holland (2007).
\newblock {\em Discrete multivariate analysis: theory and practice}.
\newblock New York, NY: Springer.

\bibitem[\protect\citeauthoryear{Bohning, Van~der Heijden, and Bunge}{Bohning
  et~al.}{2017}]{bohning2017capture}
Bohning, D., P.~G.~M. Van~der Heijden, and J.~Bunge (2017).
\newblock {\em Capture-recapture methods for the social and medical sciences}.
\newblock Boca Raton: CRC Press.

\bibitem[\protect\citeauthoryear{Brittain and B{\"{o}}hning}{Brittain and
  B{\"{o}}hning}{2009}]{Brittain2009}
Brittain, S. and D.~B{\"{o}}hning (2009).
\newblock Estimators in capture-recapture studies with two sources.
\newblock {\em AStA Advances in Statistical Analysis\/}~{\em 93\/}(1), 23--47.

\bibitem[\protect\citeauthoryear{Bunke}{Bunke}{2016}]{Bunke2016}
Bunke, T. (2016).
\newblock Human trafficking legislation as a resource : Contradictory
  interpretations of human trafficking in zambia.
\newblock {\em Journal of Trafficking, Organized Crime and Security\/}~{\em
  2\/}(2), 113--126.

\bibitem[\protect\citeauthoryear{Böhning}{Böhning}{2020}]{Bohning2020}
Böhning, D. (2020).
\newblock Discussion of read paper ``multiple-systems analysis for the
  quantification of modern slavery: classical and bayesian approaches''.
\newblock ~{\em 183\/}(3), 716.

\bibitem[\protect\citeauthoryear{Carpenter, Gelman, Hoffman, Lee, Goodrich,
  Betancourt, Brubaker, Guo, Li, and Riddell}{Carpenter
  et~al.}{2017}]{carpenter2017stan}
Carpenter, B., A.~Gelman, M.~D. Hoffman, D.~Lee, B.~Goodrich, M.~Betancourt,
  M.~A. Brubaker, J.~Guo, P.~Li, and A.~Riddell (2017).
\newblock Stan: a probabilistic programming language.
\newblock {\em Journal of Statistical Software\/}~{\em 76\/}(1), 1--32.

\bibitem[\protect\citeauthoryear{Chan, Silverman, and Vincent}{Chan
  et~al.}{2020}]{Chan2020}
Chan, L., B.~W. Silverman, and K.~Vincent (2020).
\newblock Multiple systems estimation for sparse capture data: Inferential
  challenges when there are nonoverlapping lists.
\newblock {\em Journal of the American Statistical Association\/}, 1--10.

\bibitem[\protect\citeauthoryear{Chao, Pan, and Chiang}{Chao
  et~al.}{2008}]{Chao2008}
Chao, A., H.~Y. Pan, and S.~C. Chiang (2008).
\newblock The {P}etersen - {L}incoln estimator and its extension to estimate
  the size of a shared population.
\newblock {\em Biometrical Journal\/}~{\em 50\/}(6), 957--970.

\bibitem[\protect\citeauthoryear{Chao, Tsay, Lin, Shau, and Chao}{Chao
  et~al.}{2001}]{Chao2001}
Chao, A., P.~K. Tsay, S.~H. Lin, W.~Y. Shau, and D.~Y. Chao (2001).
\newblock The applications of capture-recapture models to epidemiological data.
\newblock {\em Statistics in Medicine\/}~{\em 20\/}(20), 3123--3157.

\bibitem[\protect\citeauthoryear{Christen}{Christen}{2012}]{christen_data_2012}
Christen, P. (2012).
\newblock {\em Data Matching: Concepts and Techniques for Record Linkage,
  Entity Resolution, and Duplicate Detection}.
\newblock Data-{Centric} {Systems} and {Applications}. Berlin Heidelberg:
  Springer-Verlag.

\bibitem[\protect\citeauthoryear{{Christophides}, {Efthymiou}, {Palpanas},
  {Papadakis}, and {Stefanidis}}{{Christophides}
  et~al.}{2019}]{Christophides2019}
{Christophides}, V., V.~{Efthymiou}, T.~{Palpanas}, G.~{Papadakis}, and
  K.~{Stefanidis} (2019).
\newblock End-to-end entity resolution for big data: A survey.
\newblock {\em arXiv e-prints\/}.
\newblock {arxiv:1905.06397}.

\bibitem[\protect\citeauthoryear{Chuang}{Chuang}{2014}]{Chuang2014}
Chuang, J.~A. (2014).
\newblock Exploitation creep and the unmaking of human trafficking law.
\newblock {\em American Journal of International Law\/}~{\em 108\/}(4),
  609--649.

\bibitem[\protect\citeauthoryear{Cockayne}{Cockayne}{2015}]{Cockayne2015}
Cockayne, J. (2015).
\newblock {\em Unshackling Development: Why we need a global partnership to end
  modern slavery}.
\newblock United Nations University.

\bibitem[\protect\citeauthoryear{Cormack}{Cormack}{1968}]{cormack1968statistics}
Cormack, R.~M. (1968).
\newblock The statistics of capture-recapture methods.
\newblock {\em Oceanography and Marine Biology : an Annual Review\/}~{\em
  6\/}(1), 55--506.

\bibitem[\protect\citeauthoryear{Cormack}{Cormack}{1989}]{Cormack1989}
Cormack, R.~M. (1989).
\newblock Log-linear models for capture-recapture.
\newblock {\em Biometrics\/}~{\em 45\/}(2), 395--413.

\bibitem[\protect\citeauthoryear{Cormack}{Cormack}{1999a}]{Cormack1999}
Cormack, R.~M. (1999a).
\newblock Problems with using capture-recapture in epidemiology: An example of
  a measles epidemic.
\newblock {\em Journal of Clinical Epidemiology\/}~{\em 52\/}(10), 909--914.

\bibitem[\protect\citeauthoryear{Cormack}{Cormack}{1999b}]{Cormack1999b}
Cormack, R.~M. (1999b).
\newblock Reply to preceding comments.
\newblock {\em Journal of Clinical Epidemiology\/}~{\em 52\/}(19), 929--933.

\bibitem[\protect\citeauthoryear{Cormack}{Cormack}{2000}]{Cormack2000}
Cormack, R.~M. (2000).
\newblock Response.
\newblock {\em Journal of clinical epidemiology\/}~{\em 53}, 1275--1277.

\bibitem[\protect\citeauthoryear{D'Amour, Heller, Moldovan, Adlam, Alipanahi,
  Beutel, Chen, Deaton, Eisenstein, Hoffman, et~al.}{D'Amour
  et~al.}{2020}]{Damour2020underspecification}
D'Amour, A., K.~Heller, D.~Moldovan, B.~Adlam, B.~Alipanahi, A.~Beutel,
  C.~Chen, J.~Deaton, J.~Eisenstein, M.~D. Hoffman, et~al. (2020).
\newblock Underspecification presents challenges for credibility in modern
  machine learning.
\newblock {\em arXiv e-prints\/}.
\newblock arXiv:2011.03395.

\bibitem[\protect\citeauthoryear{Darroch, Lauritzen, and Speed}{Darroch
  et~al.}{1980}]{Darroch1980}
Darroch, J.~N., S.~L. Lauritzen, and T.~P. Speed (1980).
\newblock Markov fields and log-linear interaction models for contingency
  tables.
\newblock {\em Annals of Statistics\/}~{\em 8\/}(3), 522--539.

\bibitem[\protect\citeauthoryear{Datta and Bales}{Datta and
  Bales}{2013}]{Datta2013}
Datta, M.~N. and K.~Bales (2013).
\newblock Slavery in europe: Part 1, estimating the dark figure.
\newblock {\em Human Rights Quarterly\/}~{\em 35\/}(4), 817--829.

\bibitem[\protect\citeauthoryear{Davidson}{Davidson}{2015}]{Davidson2015}
Davidson, J.~O. (2015).
\newblock {\em Modern Slavery: The Margins of Freedom}.
\newblock New York, NY: Palgrave Macmillan.

\bibitem[\protect\citeauthoryear{Dawid and Lauritzen}{Dawid and
  Lauritzen}{1993}]{Dawid1993}
Dawid, A.~P. and S.~L. Lauritzen (1993).
\newblock Hyper markov laws in the statistical analysis of decomposable
  graphical models.
\newblock {\em Annals of Statistics\/}~{\em 21\/}(3), 1272--1317.

\bibitem[\protect\citeauthoryear{DiCiccio and Efron}{DiCiccio and
  Efron}{1996}]{diciccio1996bootstrap}
DiCiccio, T.~J. and B.~Efron (1996).
\newblock Bootstrap confidence intervals.
\newblock {\em Statistical science\/}~{\em 11\/}(3), 189--228.

\bibitem[\protect\citeauthoryear{Dottridge}{Dottridge}{2017}]{Dottridge2017}
Dottridge, M. (2017).
\newblock Eight reasons why we shouldn’t use the term ‘modern slavery’.
\newblock Available online:
  \url{https://www.opendemocracy.net/en/beyond-trafficking-and-slavery/eight-reasons-why-we-shouldn-t-use-term-modern-slavery/}.
\newblock Accessed: 2020-02-29.

\bibitem[\protect\citeauthoryear{Earl and Deem}{Earl and
  Deem}{2005}]{earl2005parallel}
Earl, D.~J. and M.~W. Deem (2005).
\newblock Parallel tempering: Theory, applications, and new perspectives.
\newblock {\em Physical Chemistry Chemical Physics\/}~{\em 7\/}(23),
  3910--3916.

\bibitem[\protect\citeauthoryear{Efron}{Efron}{1982}]{efron1982jackknife}
Efron, B. (1982).
\newblock {\em The jackknife, the bootstrap and other resampling plans}.
\newblock Philadelphia, PA: Society for Industrial and Applied Mathematics.

\bibitem[\protect\citeauthoryear{Far, King, Bird, Overstall, Worthington, and
  Jewell}{Far et~al.}{2021}]{Far2021}
Far, S.~S., R.~King, S.~Bird, A.~Overstall, H.~Worthington, and N.~Jewell
  (2021).
\newblock {Multiple Systems Estimation for Modern Slavery: Robustness of List
  Omission and Combination}.
\newblock {\em Crime \& Delinquency\/}~{\em 67\/}(13-14), 2213--2236.

\bibitem[\protect\citeauthoryear{Farcomeni and Tardella}{Farcomeni and
  Tardella}{2012}]{Farcomeni2012}
Farcomeni, A. and L.~Tardella (2012).
\newblock Identifiability and inferential issues in capture-recapture
  experiments with heterogeneous detection probabilities.
\newblock {\em Electronic Journal of Statistics\/}~{\em 6}, 2602--2626.

\bibitem[\protect\citeauthoryear{Farrel, Dank, Kafafian, Lockwood, Pfeffer,
  Hughes, and Vincent}{Farrel et~al.}{2019}]{Farrel2019}
Farrel, A., M.~Dank, M.~Kafafian, S.~Lockwood, R.~Pfeffer, A.~Hughes, and
  K.~Vincent (2019).
\newblock Capturing human trafficking victimization through crime reporting.
\newblock pp.\  1--39.

\bibitem[\protect\citeauthoryear{Feingold}{Feingold}{2010}]{Feingold2010}
Feingold, D.~A. (2010).
\newblock Trafficking in numbers: the social construction of human trafficking
  data.
\newblock In P.~Andreas and K.~{Greenhill, M.} (Eds.), {\em Sex, Drugs, and
  Body Counts}, pp.\  46--74. Ithaca: Cornell University Press.

\bibitem[\protect\citeauthoryear{Fienberg}{Fienberg}{1972}]{Fienberg1972}
Fienberg, S.~E. (1972).
\newblock {The multiple recapture census for closed populations and incomplete
  $2^k$ contingency tables}.
\newblock {\em Biometrika\/}~{\em 59\/}(3), 591--603.

\bibitem[\protect\citeauthoryear{Friedman, Hastie, Tibshirani, et~al.}{Friedman
  et~al.}{2001}]{friedman2001elements}
Friedman, J., T.~Hastie, R.~Tibshirani, et~al. (2001).
\newblock {\em The elements of statistical learning}, Volume~1.
\newblock Springer series in statistics New York.

\bibitem[\protect\citeauthoryear{Gelman, Carlin, Stern, Dunson, Vehtari, and
  Rubin}{Gelman et~al.}{2013}]{gelman2013bayesian}
Gelman, A., J.~B. Carlin, H.~S. Stern, D.~B. Dunson, A.~Vehtari, and D.~B.
  Rubin (2013).
\newblock {\em Bayesian data analysis}.
\newblock Boca Raton: CRC press.

\bibitem[\protect\citeauthoryear{Gelman, Rubin, et~al.}{Gelman
  et~al.}{1992}]{gelman1992inference}
Gelman, A., D.~B. Rubin, et~al. (1992).
\newblock Inference from iterative simulation using multiple sequences.
\newblock {\em Statistical science\/}~{\em 7\/}(4), 457--472.

\bibitem[\protect\citeauthoryear{Hald}{Hald}{2005}]{hald2005history}
Hald, A. (2005).
\newblock {\em A history of probability and statistics and their applications
  before 1750}.
\newblock New York, NY: Wiley.

\bibitem[\protect\citeauthoryear{Hall}{Hall}{1992}]{hall2013bootstrap}
Hall, P. (1992).
\newblock {\em The bootstrap and Edgeworth expansion}.
\newblock New York, NY: Springer-Verlag.

\bibitem[\protect\citeauthoryear{Hook, Hsia, and Regal}{Hook
  et~al.}{2012}]{Hook2012}
Hook, E.~B., M.~S. Hsia, and R.~R. Regal (2012).
\newblock Accuracy of capture-recapture estimates of prevalence.
\newblock {\em Epidemiologic Methods\/}~{\em 1\/}(1), 1--11.

\bibitem[\protect\citeauthoryear{Hook and Regal}{Hook and
  Regal}{1999}]{Hook1999}
Hook, E.~B. and R.~R. Regal (1999).
\newblock Recommendations for presentation and evaluation of capture-recapture
  estimates in epidemiology.
\newblock {\em Journal of Clinical Epidemiology\/}~{\em 52\/}(10), 917--926.

\bibitem[\protect\citeauthoryear{Hook and Regal}{Hook and
  Regal}{2000}]{hook2000accuracy}
Hook, E.~B. and R.~R. Regal (2000).
\newblock Accuracy of alternative approaches to capture-recapture estimates of
  disease frequency: internal validity analysis of data from five sources.
\newblock {\em American journal of epidemiology\/}~{\em 152\/}(8), 771--779.

\bibitem[\protect\citeauthoryear{Hook, Regal, and Cormack}{Hook
  et~al.}{2000}]{Hook2000}
Hook, E.~B., R.~R. Regal, and R.~Cormack (2000).
\newblock On the need for a 16th and 17th recommendation for capture-recapture
  analysis.
\newblock {\em Journal of Clinical Epidemiology\/}~{\em 53\/}(12), 1275--1276.

\bibitem[\protect\citeauthoryear{Hwang and Huggins}{Hwang and
  Huggins}{2005}]{Hwang2005}
Hwang, W.~H. and R.~Huggins (2005).
\newblock An examination of the effect of heterogeneity on the estimation of
  population size using capture-recapture data.
\newblock {\em Biometrika\/}~{\em 92\/}(1), 229--233.

\bibitem[\protect\citeauthoryear{{International Labour
  Organization}}{{International Labour Organization}}{2017a}]{ILO2017}
{International Labour Organization} (2017a).
\newblock {\em Global Estimates Of Modern Slavery: Forced Labour And Forced
  Marriage}.
\newblock Geneva.

\bibitem[\protect\citeauthoryear{{International Labour
  Organization}}{{International Labour Organization}}{2017b}]{ILO2017b}
{International Labour Organization} (2017b).
\newblock {\em Methodology of the global estimates of modern slavery: Forced
  labour and forced marriage}.
\newblock Geneva.

\bibitem[\protect\citeauthoryear{Johndrow, Bhattacharya, and Dunson}{Johndrow
  et~al.}{2017}]{Johndrow2017}
Johndrow, J.~E., A.~Bhattacharya, and D.~B. Dunson (2017).
\newblock Tensor decompositions and sparse log-linear models.
\newblock {\em Annals of Statistics\/}~{\em 45\/}(1), 1--38.

\bibitem[\protect\citeauthoryear{Kang, Gile, and Price}{Kang
  et~al.}{2020}]{Kang2020}
Kang, S., K.~Gile, and M.~Price (2020).
\newblock Nested dirichlet process for population size estimation from
  multi-list recapture data.
\newblock {\em arXiv e-prints\/}~(2016).
\newblock arxiv:2007.06160.

\bibitem[\protect\citeauthoryear{Laake, Johnson, and Conn}{Laake
  et~al.}{2013}]{laake2013marked}
Laake, J.~L., D.~S. Johnson, and P.~B. Conn (2013).
\newblock marked: an r package for maximum likelihood and m arkov c hain m onte
  c arlo analysis of capture--recapture data.
\newblock {\em Methods in Ecology and Evolution\/}~{\em 4\/}(9), 885--890.

\bibitem[\protect\citeauthoryear{Landman}{Landman}{2020}]{Landman2020}
Landman, T. (2020).
\newblock Measuring modern slavery: Law, human rights and new forms of data.
\newblock {\em Human Rights Quarterly\/}~{\em 42\/}(2), 303--331.

\bibitem[\protect\citeauthoryear{Laplace}{Laplace}{1820}]{Laplace1820}
Laplace, P.-S. (1820).
\newblock {\em Théorie analytique des probabilités\/} (3 ed.).
\newblock Paris, France.

\bibitem[\protect\citeauthoryear{Lincoln}{Lincoln}{1930}]{Lincoln1930}
Lincoln, F.~C. (1930).
\newblock Calculating waterfowl abundance on the basis of banding returns.
\newblock Technical report, United States Department of Agriculture.

\bibitem[\protect\citeauthoryear{Link}{Link}{2003}]{Link2003}
Link, W.~A. (2003).
\newblock Nonidentifiability of population size from capture-recapture data
  with heterogeneous detection probabilities.
\newblock {\em Biometrics\/}~{\em 59\/}(4), 1123--1130.

\bibitem[\protect\citeauthoryear{Lum, Johndrow, and Ball}{Lum
  et~al.}{2015}]{Lum2015}
Lum, K., J.~Johndrow, and P.~Ball (2015).
\newblock {\em dga: Capture-Recapture Estimation using Bayesian Model
  Averaging}.
\newblock R package version 1.2.

\bibitem[\protect\citeauthoryear{Lum, Price, and Banks}{Lum
  et~al.}{2013}]{Lum2013}
Lum, K., M.~E. Price, and D.~Banks (2013).
\newblock Applications of multiple systems estimation in human rights research.
\newblock {\em American Statistician\/}~{\em 67\/}(4), 191--200.

\bibitem[\protect\citeauthoryear{Lyneham, Dowling, and Bricknell}{Lyneham
  et~al.}{2019}]{Lyneham2019}
Lyneham, S., C.~Dowling, and S.~Bricknell (2019).
\newblock Estimating the dark figure of human trafficking and slavery
  victimisation in australia.
\newblock In {\em Crime and Justice Research 2019}. Australian Institute of
  Criminology.

\bibitem[\protect\citeauthoryear{Madigan, York, and Allard}{Madigan
  et~al.}{1995}]{Madigan1995}
Madigan, D., J.~York, and D.~Allard (1995).
\newblock Bayesian graphical models for discrete data.
\newblock {\em International Statistical Review / Revue Internationale de
  Statistique\/}~{\em 63\/}(2), 215--232.

\bibitem[\protect\citeauthoryear{Madigan and York}{Madigan and
  York}{1997}]{Madigan1997}
Madigan, D. and J.~C. York (1997).
\newblock Bayesian methods for estimation of the size of a closed population.
\newblock {\em Biometrika\/}~{\em 84\/}(1), 19--31.

\bibitem[\protect\citeauthoryear{{Manrique-Vallier}}{{Manrique-Vallier}}{2016}]{ManriqueVallier2016}
{Manrique-Vallier}, D. (2016).
\newblock Bayesian population size estimation using dirichlet process mixtures.
\newblock {\em Biometrics\/}~{\em 72\/}(4), 1246--1254.

\bibitem[\protect\citeauthoryear{{Manrique-Vallier}, Ball, and
  Sadinle}{{Manrique-Vallier} et~al.}{2019}]{Manriquevallier2019}
{Manrique-Vallier}, D., P.~Ball, and M.~Sadinle (2019).
\newblock Capture-recapture for casualty estimation and beyond : Recent
  advances and research directions.
\newblock pp.\  1--16.

\bibitem[\protect\citeauthoryear{{Manrique-Vallier} and
  Fienberg}{{Manrique-Vallier} and Fienberg}{2008}]{ManriqueVallier2008}
{Manrique-Vallier}, D. and S.~E. Fienberg (2008).
\newblock Population size estimation using individual level mixture models.
\newblock {\em Biometrical Journal\/}~{\em 50\/}(6), 1051--1063.

\bibitem[\protect\citeauthoryear{{Manrique-Vallier}, Price, and
  Gohdes}{{Manrique-Vallier} et~al.}{2013}]{ManriqueVallier2013}
{Manrique-Vallier}, D., M.~E. Price, and A.~Gohdes (2013).
\newblock Multiple systems estimation techniques for estimating casualties in
  armed conflicts.
\newblock {\em Counting Civilian Casualties\/}, 165--181.

\bibitem[\protect\citeauthoryear{Mende}{Mende}{2019}]{Mende2019}
Mende, J. (2019).
\newblock The concept of modern slavery: Definition, critique, and the human
  rights frame.
\newblock {\em Human Rights Review\/}~{\em 20\/}(2), 229--248.

\bibitem[\protect\citeauthoryear{Otis, Burnham, White, and Anderson}{Otis
  et~al.}{1978}]{Otis1978}
Otis, D.~L., K.~P. Burnham, G.~C. White, and D.~R. Anderson (1978).
\newblock Statistical inference from capture data on closed animal populations.
\newblock {\em Wildlife Monographs\/}.

\bibitem[\protect\citeauthoryear{Overstall and King}{Overstall and
  King}{2014}]{overstall2014conting}
Overstall, A. and R.~King (2014).
\newblock conting: An r package for bayesian analysis of complete and
  incomplete contingency tables.
\newblock {\em Journal of Statistical Software\/}~{\em 58\/}(7), 1--27.

\bibitem[\protect\citeauthoryear{Petersen}{Petersen}{1895}]{Petersen1895}
Petersen, C. G.~J. (1895).
\newblock The yearly immigration of young plaice into the {L}imfjord from the
  {German Sea}, etc.
\newblock {\em Report of the Danish Biological Station to the Home
  Department\/}~{\em 6}.

\bibitem[\protect\citeauthoryear{Piper, Segrave, and Napier-moore}{Piper
  et~al.}{2015}]{Piper2015}
Piper, N., M.~Segrave, and R.~Napier-moore (2015).
\newblock Editorial: What's in a name? distinguishing forced labour,
  trafficking and slavery.
\newblock {\em Anti-Trafficking Review\/}~(5).

\bibitem[\protect\citeauthoryear{{Quetelet}}{{Quetelet}}{1827}]{Quetelet1827}
{Quetelet}, A. (1827).
\newblock {\em Recherches sur la population, les naissances, les décès, les
  prisons, les dépôts de mendicité, etc dans le royaume des Pays-Bas}.
\newblock Bruxelles: chez H. Tarlier.

\bibitem[\protect\citeauthoryear{Regal and Hook}{Regal and
  Hook}{1991}]{regal1991effects}
Regal, R.~R. and E.~B. Hook (1991).
\newblock The effects of model selection on confidence intervals for the size
  of a closed population.
\newblock {\em Statistics in Medicine\/}~{\em 10\/}(5), 717--721.

\bibitem[\protect\citeauthoryear{Ridout}{Ridout}{2020}]{Ridout2020}
Ridout, M.~S. (2020).
\newblock Discussion of read paper ``multiple-systems analysis for the
  quantification of modern slavery: classical and bayesian approaches''.
\newblock ~{\em 183\/}(3), 722.

\bibitem[\protect\citeauthoryear{Sadinle}{Sadinle}{2018}]{Sadinle2018}
Sadinle, M. (2018).
\newblock Bayesian propagation of record linkage uncertainty into population
  size estimation of human rights violations.
\newblock {\em The Annals of Applied Statistics\/}~{\em 12\/}(2), 1013--1038.

\bibitem[\protect\citeauthoryear{Scarpa}{Scarpa}{2020}]{Scarpa2020}
Scarpa, S. (2020).
\newblock {\em The Nebulous Definition of Slavery: Legal Versus Sociological
  Definitions of Slavery}, pp.\  131--144.
\newblock Cham: Springer International Publishing.

\bibitem[\protect\citeauthoryear{Seber}{Seber}{1982}]{Seber1982}
Seber, G. A.~F. (1982).
\newblock {\em The estimation of animal abundance and related parameters}.
\newblock London: Charles Griffin.

\bibitem[\protect\citeauthoryear{Seber}{Seber}{1986}]{Seber1986}
Seber, G. A.~F. (1986).
\newblock A review of estimating animal abundance.
\newblock {\em Biometrics\/}~{\em 42\/}(2), 267--292.

\bibitem[\protect\citeauthoryear{Seber}{Seber}{1992}]{Seber1992}
Seber, G. A.~F. (1992).
\newblock A review of estimating animal abundance {II}.
\newblock {\em International Statistical Review\/}~{\em 60\/}(2), 129--166.

\bibitem[\protect\citeauthoryear{Sekar and Deming}{Sekar and
  Deming}{1949}]{sekar1949method}
Sekar, C.~C. and W.~E. Deming (1949).
\newblock On a method of estimating birth and death rates and the extent of
  registration.
\newblock {\em Journal of the American statistical Association\/}~{\em
  44\/}(245), 101--115.

\bibitem[\protect\citeauthoryear{Sigmon}{Sigmon}{2008}]{Sigmon2008}
Sigmon, J.~N. (2008).
\newblock Combating modern-day slavery: Issues in identifying and assisting
  victims of human trafficking worldwide.
\newblock {\em Victims and Offenders\/}~{\em 3\/}(2-3), 245--257.

\bibitem[\protect\citeauthoryear{Silverman}{Silverman}{2014}]{Silverman2014}
Silverman, B.~W. (2014).
\newblock Modern slavery: an application of multiple systems estimation.
\newblock {\em Home Office, London\/}.
\newblock Available online:
  \url{https://www.gov.uk/government/publications/modern-slavery-an-
  application-of-multiple-systems-estimation}.

\bibitem[\protect\citeauthoryear{Silverman}{Silverman}{2020}]{Silverman2020}
Silverman, B.~W. (2020).
\newblock Multiple-systems analysis for the quantification of modern slavery:
  classical and bayesian approaches.
\newblock {\em Journal of the Royal Statistical Society, Series A\/}~{\em 183},
  691--736.

\bibitem[\protect\citeauthoryear{Stigler}{Stigler}{1986}]{stigler1986history}
Stigler, S.~M. (1986).
\newblock {\em The history of statistics: The measurement of uncertainty before
  1900}.
\newblock Harvard University Press.

\bibitem[\protect\citeauthoryear{{UK Parliament}}{{UK
  Parliament}}{2015}]{UKact2015}
{UK Parliament} (2015).
\newblock Modern slavery act 2015.
\newblock Available online:
  \url{http://www.legislation.gov.uk/ukpga/2015/30/pdfs/ukpga_20150030_en.pdf}.

\bibitem[\protect\citeauthoryear{{United Kingdom Human Trafficking
  Centre}}{{United Kingdom Human Trafficking Centre}}{2014}]{NCAOCCUKHTC2014}
{United Kingdom Human Trafficking Centre} (2014).
\newblock A strategic assessment on the nature and scale of human trafficking
  in 2013.
\newblock Technical report, National Crime Agency.

\bibitem[\protect\citeauthoryear{{United Nations}}{{United
  Nations}}{2000}]{UnitedNations2000b}
{United Nations} (2000).
\newblock Protocol to prevent, suppress and punish trafficking in persons,
  especially women and children, supplementing the {United Nations} convention
  against transnational organized crime.

\bibitem[\protect\citeauthoryear{{United Nations}}{{United
  Nations}}{2001}]{UnitedNations2000}
{United Nations} (2001).
\newblock United nations convention against transnational organized crime
  {(A/RES/55/25)}.

\bibitem[\protect\citeauthoryear{{UNODC}}{{UNODC}}{2018a}]{UNODC2018a}
{UNODC} (2018a).
\newblock Monitoring target 16.2 of the united nations sustainable development
  goals: Multiple systems estimation of the numbers of presumed victims of
  trafficking in persons: Ireland.

\bibitem[\protect\citeauthoryear{{UNODC}}{{UNODC}}{2018b}]{UNODC2018b}
{UNODC} (2018b).
\newblock Monitoring target 16.2 of the united nations sustainable development
  goals: Multiple systems estimation of the numbers of presumed victims of
  trafficking in persons: Romania.

\bibitem[\protect\citeauthoryear{{UNODC}}{{UNODC}}{2018c}]{UNODC2018c}
{UNODC} (2018c).
\newblock Monitoring target 16.2 of the united nations sustainable development
  goals: Multiple systems estimation of the numbers of presumed victims of
  trafficking in persons: Serbia.

\bibitem[\protect\citeauthoryear{van Dijk, Cruyff, van~der Heijden, and
  Kragten-Heerdink}{van Dijk et~al.}{2017}]{VanDijk2017}
van Dijk, J., M.~Cruyff, P.~G.~M. van~der Heijden, and S.~L.~J.
  Kragten-Heerdink (2017).
\newblock Monitoring target 16.2 of the united nations sustainable development
  goals: a multiple systems estimation of the numbers of presumed human
  trafficking victims in the netherlands in 2010-2015 by year, age, gender,
  form of exploitation and nationality.
\newblock Technical report, United Nations Office on Drugs and Crime.

\bibitem[\protect\citeauthoryear{Vincent, Bales, Durgana, Cruyff, van~der
  Heijden, and van Dijk}{Vincent et~al.}{2020a}]{Vincent2020}
Vincent, K., K.~Bales, D.~P. Durgana, M.~Cruyff, P.~G. van~der Heijden, and
  J.~van Dijk (2020a).
\newblock Misunderstandings of multiple systems estimation: A response to
  “{O}n the unreliability of multiple systems estimation for estimating the
  number of potential victims of modern slavery in the {UK}” by {Whitehead,
  Jackson, Balch, and Francis (2019)}.
\newblock {\em Journal of Human Trafficking\/}, 1--6.

\bibitem[\protect\citeauthoryear{Vincent, Bales, Durgana, Cruyff, van~der
  Heijden, and van Dijk}{Vincent et~al.}{2020b}]{Vincent2020b}
Vincent, K., K.~Bales, D.~P. Durgana, M.~Cruyff, P.~G. van~der Heijden, and
  J.~van Dijk (2020b).
\newblock {V}incent et al. concluding response to {W}hitehead et al.
\newblock {\em Journal of Human Trafficking\/}, 1--2.

\bibitem[\protect\citeauthoryear{{Walk Free Foundation}}{{Walk Free
  Foundation}}{2013}]{WalkFree2013}
{Walk Free Foundation} (2013).
\newblock {\em The Global Slavery Index}.

\bibitem[\protect\citeauthoryear{Whitehead, Jackson, Balch, and
  Francis}{Whitehead et~al.}{2019}]{Whitehead2019}
Whitehead, J., J.~Jackson, A.~Balch, and B.~Francis (2019).
\newblock On the unreliability of multiple systems estimation for estimating
  the number of potential victims of modern slavery in the {UK}.
\newblock {\em Journal of Human Trafficking\/}, 1--13.

\bibitem[\protect\citeauthoryear{Whitehead, Jackson, Balch, and
  Francis}{Whitehead et~al.}{2020}]{Whitehead2020}
Whitehead, J., J.~Jackson, A.~Balch, and B.~Francis (2020).
\newblock Whitehead et al. response to “{M}isunderstandings of multiple
  systems estimation.”.
\newblock {\em Journal of Human Trafficking\/}, 1--5.

\bibitem[\protect\citeauthoryear{Wittes}{Wittes}{1974}]{Wittes1974}
Wittes, J.~T. (1974).
\newblock Applications of a multinomial capture-recapture model to
  epidemiological data.
\newblock {\em Journal of the American Statistical Association\/}~{\em
  69\/}(345), 93--97.

\bibitem[\protect\citeauthoryear{Worthington, McCrea, King, and
  Vincent}{Worthington et~al.}{2021}]{Worthington2021}
Worthington, H., R.~McCrea, R.~King, and K.~S. Vincent (2021).
\newblock {How Ideas from Ecological Capture-Recapture Models May Inform
  Multiple Systems Estimation Analyses}.
\newblock {\em Crime \& Delinquency\/}~{\em 67\/}(13-14), 2278--2294.

\bibitem[\protect\citeauthoryear{Yao, Vehtari, and Gelman}{Yao
  et~al.}{2020}]{yao2020stacking}
Yao, Y., A.~Vehtari, and A.~Gelman (2020).
\newblock Stacking for non-mixing bayesian computations: The curse and blessing
  of multimodal posteriors.
\newblock {\em arXiv e-prints\/}.
\newblock arXiv:2006.12335.

\bibitem[\protect\citeauthoryear{Yip, Richard, Fienberg, Junker, Laporte, and
  Libman}{Yip et~al.}{1995}]{Yip1995}
Yip, S.~F., M.~Richard, S.~E. Fienberg, B.~W. Junker, R.~E. Laporte, and I.~M.
  Libman (1995).
\newblock Capture-recapture and multiple-record systems estimation {I}: History
  and theoretical development.
\newblock {\em American Journal of Epidemiology\/}~{\em 142\/}(10), 1047--1058.

\bibitem[\protect\citeauthoryear{{Yip}, Tajima, Seber, Buckland, Cormack,
  Unwin, Chang, Fienberg, Junker, LaPorte, Libman, and McCarty}{{Yip}
  et~al.}{1995}]{Yip1995b}
{Yip}, S.~F., N.~Tajima, G.~A.~F. Seber, S.~T. Buckland, R.~M. Cormack,
  N.~Unwin, Y.-F. Chang, S.~E. Fienberg, B.~W. Junker, R.~E. LaPorte, I.~M.
  Libman, and D.~J. McCarty (1995).
\newblock Capture-recapture and multiple-record systems estimation {II}:
  Applications in human diseases.
\newblock {\em American Journal of Epidemiology\/}~{\em 142\/}(10), 1059--1068.

\bibitem[\protect\citeauthoryear{York and Madigan}{York and
  Madigan}{1992}]{York1992}
York, J.~C. and D.~Madigan (1992).
\newblock Bayesian methods for estimation of the size of a closed population.
\newblock Technical Report 234, University of Washington, Seattle, Wa.

\end{thebibliography}
\endgroup

\section{Datasets Summary}\label{sec:appendix_data}

%{[TODO: write intro and complete this...]}

\begin{description}
    \item[United Kingdom] \cite{Silverman2014, Bales2015} considered data on potential victims of modern slavery collected as part of the National Crime Agency Strategic Assessment \citep{NCAOCCUKHTC2014}. This data identifies potential victims reported by six different sources of information in 2013: the police force (PF), the National Crime Agency (NCA), local authorities (LA), governmental organizations (GO), non-governmental organizations (NG), and the general public (GP). A total of 2744 cases are recorded, including 221 cases which appeared on more than one list. No cases appear on both the LA and GP lists, nor on both the LA and NCA lists. This may be due to the small size of the LA list (94 cases), to the way that the lists were constructed as part of the Strategic Assessment, or due to other unknown factors. No covariate information, such as the type of exploitation, country of origin, or demographic variables, is available for the potential victims. The mechanisms through which cases came to be observed on only one list or on more than one lists, such as the referral pathways between organizations, are not described.
    Using this data, \cite{Silverman2014, Bales2015} inferred a total of between 10,000 and 13,000 potential victims in the UK in 2013.
    
    \item[New Orleans] \cite{Bales2019} carried out a similar MSE study in the Greater New Orleans region of the United States. Eight (anonymous) organizations reported together a total of 185 ``confirmed cases of human trafficking'' for 2016, including 14 victims who appeared on more than one list. The authors noted challenges associated with this data, notably the definitional ambiguity of human trafficking cases and the differing goals of some organizations. One organization worked only with victims of labor trafficking, one only with victims of labor trafficking who were also victims of sex trafficking, another only reported victims of both sex trafficking and labor trafficking, and five only reported victims of sex trafficking. No covariate information is publicly available nor used in this study.
    Using this data, the authors inferred between around 650 and 1700 victims in 2016.
    
    \item[Netherlands] \cite{VanDijk2017} estimated the number of trafficked victims in the Netherlands disaggregated by sex, age, form of exploitation, nationality, and year between 2010 and 2015. This used data from a state-sponsored NGO, CoMensha, to which presumed cases of human trafficking are reported by the police and other organizations. The sources of information were grouped into six lists, the border police (K), the inspectorate (I), residential centers/shelters (O), national police (P), regional coordinators (R) and others (Z). A total of with 8,234 cases were observed, including 432 cases which appeared on more than one list. Only the aggregated data, with no covariate information, is publicly available. The authors noted challenges with this data, notably the definitional ambiguity of reported cases. In particular, the border police reported cases which do not correspond to the international definition of human trafficking. This lead the authors to produce estimates for both the inclusion and exclusion of this list. The authors also noted that the inclusion of covariates resulted in a significant change in estimated population size. Using covariates and all lists, they estimated around 42,000 victims. Using the same approach but ignoring covariates, they obtained an estimate of around 58,000 victims.
    
    \item[Western U.S.] \cite{Farrel2019} investigated the reporting and data collection mechanisms in place at law enforcement organizations regarding cases of human trafficking. They also carried out MSE studies at two locations, including at the ``Western site'' for which aggregated data is publicly available. This data records a total of 345 individuals which appeared in four lists, including a law enforcement list and three community service provider lists. A total of 23 individuals were observed in more than one list. The authors used covariate information with multiple imputation for missing data in order to obtain an estimate of between around 1,600 and 3,600 victims in the Western site in 2016.
    
    \item[Australia] \cite{Lyneham2019} reproduced the study of \cite{Silverman2014} with data collected by the Australian Federal Police. The dataset which appears in \cite{Lyneham2019} and which we have use consists of four lists and a total of 414 cases of victims observed between 2015-2016 and 2016-2017. The authors estimated between 900 to 1,500 total victims for this time period.
    
    \item[Other studies] Other MSE studies for the quantification of modern slavery have been carried out in Serbia, Ireland, Romania and Slovakia \citep{UNODC2018a, UNODC2018b, UNODC2018c, Vincent2020}. Data is not publicly available in these cases.
\end{description}

%These studies showcase some the challenges associated with the modern slavery data. First, aggregated data ignores covariate information and \textit{individual heterogeneity} is therefore to be expected regarding the probabilities of appearance in different lists. Second, definitional ambiguity and the focus of certain organizations on differing populations can additionally cause \textit{structural zeroes} in the data. Third, the data is very limited for some studies, with very few cases appearing on more than one lists. Fourth, \textit{referal pathways} between organizations are unknown. These issues are considered in more detail in section {[XX]}.

\section{Proof of Theorem \ref{thm:characterization}}\label{sec:proof_thm1}

%{[TODO: Rewrite proof using lemmas as Bekas suggested. Clean up.]}

The proof of the Theorem is separated in four parts. First we set up notation and the background of the problem. Steps 1-3 then provide the main ingredients necessary to establish \eqref{eq:thm_characterization}. While the proof is not complex, it does require a certain level of detail. Note that, throughout, convergence is meant to be in the almost sure sense.

\subsection*{Setup}
Let us make explicit the dependency of the counts $n_x$ on $N$ by writing
\begin{equation}\label{eq:def_counts_with_N}
    n_x^{(N)} = \# \{i \leq N : W_i = x\}, \quad x \in \{0,1\}^L \backslash\{\bm{0}\},\\
    n^{(N)} = \sum_{x \not = \bm{0}} n_x^{(N)},
\end{equation}
where $W_i$, $i=1, 2, 3, \dots$, is an independent sequence defined as in \eqref{eq:def_px}.

The count process $(n_x^{(N)})_{x \not = \bm{0}}$, for $N = 1,2,3,\dots$, is a Markov chain with distribution defined through the following:
\begin{enumerate}
    \item given $(n_x^{(N)})_{x \not = \bm{0}}$, with probability $p_{0}$ we have $(n_x^{(N+1)})_{x \not = \bm{0}} = (n_x^{(N)})_{x \not = \bm{0}}$;
    \item otherwise, for $x \in \{0,1\}^L\backslash\{\bm{0}\}$ distributed with probability mass function $q_x$, we have $n_x^{(N+1)} = n_x^{(N)} + 1$ and $n_{x'}^{(N+1)} = n_{x'}^{(N)}$ for every $x' \not = x$.
\end{enumerate}

Now recall that if $\gamma = 0$, then for any sequence $(n_x^{(N)})_{x \not = \bm{0}}$ distributed as above we have
\begin{equation}\label{eq:N_hat_consistency}
    \hat N\left( (n_x^{(N)})_{x \not = \bm{0}} \right)/N \rightarrow 1
\end{equation}
(almost surely) as $N \rightarrow \infty$, by assumption and definition of consistency.

Here, since we do not assume that $\gamma = 0$, our argument instead relies on the fact that there exists a random subsequence $N_k$, $k = 1,2,3,\dots$, such that $(n_x^{(N_k)})_{x \not = \bm{0}}$, $k = 1,2,3,\dots$ corresponds to the count process of a model with no full-way interaction and therefore
\begin{equation}
    \hat N((n_x^{(N_k)})_{x \not = \bm{0}})/k \rightarrow 1
\end{equation}
as $k \rightarrow \infty$. This is shown in Step 1 below. In Step 2, we show that the sequence $N_k$ satisfies $N_k / k \rightarrow (1-p_{\bm{0}}')/(1-p_{\bm{0}})$ almost surely as $k \rightarrow \infty$, allowing us to compute the limit of the ratio $\hat N((n_x^{(N_k)})_{x \not = \bm{0}})/N_k$. Step 3 shows that the relative asymptotic bias of $\hat N$ exists and we then easily deduce its form.

\subsection*{Step 1}

Here we define a random sequence $N_k$, $k = 1,2,3,\dots$, such that $\hat N((n_x^{(N_k)})_{x \not = \bm{0}})/k \rightarrow 1$ as $k \rightarrow \infty$. Let $p_{\bm{0}}' > 0$ be defined as
\begin{equation}\label{eq:def_px_prime}
    \frac{p_{\bm{0}}'}{1-p_{\bm{0}}'} = e^{\gamma} \frac{p_{\bm{0}}}{1-p_{\bm{0}}};\quad  p_{\bm{0}}' = \frac{p_{\bm{0}}e^\gamma}{1-p_{\bm{0}} + p_{\bm{0}}e^{\gamma}}
\end{equation}
and for $x \not = \bm{0}$ let $p_x' = (1-p_{\bm{0}}')q_x$. The probabilities $p_x'$, $x \in \{0,1\}^L$, define an alternative model to \eqref{eq:def_px} for which
\begin{equation}\label{eq:def_qx_prime}
    q_x' = \frac{p_x'}{(1-p_{\bm{0}}')} = q_x \quad \text{and} \quad \gamma' = \log \left( \frac{1-p_{\bm{0}}'}{p_{\bm{0}}'} \frac{q_{\text{odd}}'}{q_{\text{even}}'} \right) = 0.
\end{equation}
Now let us define the random sequence $N_k$ as follows as a function of $\{(n_x^{(N)})_{x \not = \bm{0}}\}_{N =1}^{\infty}$. Let $0 = s_0 < s_1 < s_2 < s_3 < \cdots$ be th sequence of integers such that $s_i$ is the smallest integer satisfying $n^{(s_i)} = i$. Let $t_1, t_2, t_3, \dots$ be an i.i.d. sequence of geometric random variables with mean $1/(1-p_{\bm{0}}')$, let $T_j = \sum_{i \leq j} t_i$, and let
\begin{equation}\label{eq:def_N_k}
    N_k = \sum_{i\,:\,T_i \leq k} (s_i - s_{i-1}).
\end{equation}
That is, $N_k$ is such that $N_k = s_j$ when $T_j \leq k < T_{j+1}-1$.

By construction, the distribution of $(n_{x}^{(N_{k+1})})_{x \not = \bm{0}}$ given $(n_{x}^{(N_{k})})_{x \not = \bm{0}}$ is the following:
\begin{enumerate}
    \item $n^{(N_{k+1})} = n^{(N_{k})}$ with probability $p_{\bm{0}}'$;
    \item otherwise $n_x^{(N_{k+1})} = n_x^{(N_{k})} + 1$ for some $x$ distributed following $q_x'$ and $n_{x'}^{(N_{k+1})} = n_{x'}^{(N_{k})}$ for $x' \not = x$.
\end{enumerate}
This is exactly the distribution of a count process obtained from the model with probabilities $p_x'$ defined in \eqref{eq:def_px_prime} and \eqref{eq:def_qx_prime}. These probabilities satisfy the no full-way interaction assumption \textbf{A3.1}, and it follows from the consistency of the estimator $\hat N$ that
\begin{equation}\label{eq:N_k_over_k_limit}
    \hat N\left((n_{x}^{(N_{k})})_{x \not = \bm{0}}\right)/k \rightarrow 1
\end{equation}
as $k \rightarrow \infty$.

\subsection*{Step 2} 

This step shows that $N_k/k \rightarrow (1-p_{\bm{0}}')/(1-p_{\bm{0}})$ as $k \rightarrow \infty$.
Note that the variables $s_i - s_{i-1}$ are independent with a geometric distribution of mean $1/(1-p_{\bm{0}})$ and they are independent from the variables $T_j$. By the strong law of large numbers, we have
\begin{equation}\label{eq:N_T_j_over_j_limit}
    N_{T_j}/j = \frac{1}{j} \sum_{i=1}^j (s_i - s_{i-1}) \rightarrow (1-p_{\bm{0}})^{-1}
\end{equation}
and 
\begin{equation}\label{eq:T_j_over_j_limit}
T_j / j = \frac{1}{j}\sum_{i=1}^j t_i  \rightarrow(1-p_{\bm{0}}')^{-1}
\end{equation}
as $j \rightarrow \infty$. Hence dividing the two we obtain $N_{T_j}/T_j \rightarrow (1-p_{\bm{0}}')/(1-p_{\bm{0}})$ as $j \rightarrow \infty$. Now for $j = j(k)$ such that $T_{j} \leq k < T_{j+1}$, we have $N_{k} = N_{T_j}$ and
\begin{equation}
N_k/k = \frac{N_{T_j}}{T_j}\frac{T_j}{k}\rightarrow (1-p_{\bm{0}}')/(1-p_{\bm{0}})
\end{equation}
follows from the fact that $T_j/k \rightarrow 1$ as $k \rightarrow \infty$.

\subsection*{Step 3}

Combining steps 1 and 2, we obtain that $\hat N((n_x^{(N_k)})_{x \not = \bm{0}})/N_k \rightarrow(1-p_{\bm{0}})/(1-p_{\bm{0}}')$ as $k \rightarrow \infty$. It can easily be verified that $(n_x^{(N)})_{x \not = \bm{0}}$ is also a subsequence of $(n_x^{(N_k)})_{x \not = \bm{0}}$, and therefore $\hat N/N$ converges with
$$
    \hat N / N = \hat N((n_x^{(N)})_{x \not = \bm{0}})/N \rightarrow(1-p_{\bm{0}})/(1-p_{\bm{0}}')
$$
Using expression \eqref{eq:def_px_prime} for $p_{\bm{0}}'$ and simplifying, we obtain \eqref{eq:thm_characterization}. This concludes the proof.

\section{Proof of Proposition \ref{prop:1}}\label{sec:proof_proposition}

Suppose there exists a consistent population size estimator $\hat N$ for a model for which no function $f$ satisfies the condition of Proposition \ref{prop:1}. That is, there exists two sets of probabilities $p_x$ and $\tilde p_x$, $x \in \{0,1\}^L$, such that $p_{\bm{0}} \not = \tilde p_{\bm{0}}$ while $q_x = p_x/(1-p_{\bm{0}}) = \tilde p_x/(1-\tilde p_{\bm{0}}) = \tilde q_x$ for $x \not = \bm{0}$. 

Continuing with similar notations as in the proof of Theorem \ref{thm:characterization}, let $n_x^{(N)}$ and $\tilde n_x^{(N)}$, $x \not = \bm{0}$, be the two sequences of observed counts corresponding to the model probabilities $p_x$ and $p_x'$, respectively. Let $n^{(N)} = \sum_{x \not = \bm{0}} n_x^{(N)}$, $\tilde n^{(N)} = \sum_{x \not = \bm{0}} \tilde n_x^{(N)}$, and let $s_k$ and $\tilde s_k$ be sequences of the smallest integers satisfying $n^{(s_k)} = k$ and $\tilde n^{(\tilde s_k)} = k$.

Since $\hat N( (n_x)^{(s_k)}_{x \not = \bm{0}} )$ is a subsequence of $\hat N( (n_x)^{(N)}_{x \not = \bm{0}} )$ and by consistency of $\hat N$, we have that $\hat N( (n_x)^{s_k}_{x \not = \bm{0}} )/ s_k \rightarrow 1$ almost surely as $k \rightarrow \infty$. For the same reasons, $\hat N( (\tilde n_x)^{(\tilde s_k)}_{x \not = \bm{0}} )/ \tilde s_k \rightarrow 1$ almost surely as $k \rightarrow \infty$.

Now notice that the two sequences $(n_x)^{s_k}_{x \not = \bm{0}}$ and $(\tilde n_x)^{\tilde s_k}_{x \not = \bm{0}}$ have exactly the same distribution since $q_x = \tilde q_x$, $x \not = \bm{0}$. Therefore $\hat N ((n_x)^{(s_k)}_{x \not = \bm{0}})/ \hat N( (\tilde n_x)^{\tilde (s_k)}_{x \not = \bm{0}} ) \rightarrow 1$ as $k \rightarrow \infty$.

It follows from the above that $s_k / \tilde s_k \rightarrow 1$. However, this can only happen if $p_{\bm{0}} = \tilde p_{\bm{0}}$, which is not the case here. Indeed, the increments $s_{k+1} - s_{k}$ are independent with geometric distribution of mean $(1-p_{\bm{0}})^{-1}$, and by the law of large numbers it follows that $s_k/k \rightarrow (1-p_{\bm{0}})^{-1}$. Similarly, $\tilde s_k/k \rightarrow (1-\tilde p_{\bm{0}})^{-1}$, from which it follows that $s_k/\tilde s_k \rightarrow (1-\tilde p_{\bm{0}})/(1-p_{\bm{0}}) \not = 1$.

\end{document}